\documentclass[
aps,
pra,
10pt,
twocolumn,
superscriptaddress,
floatfix,
nofootinbib,
longbibliography
]{revtex4-2}

\usepackage{tikz}
\usepackage{graphicx}
\usepackage{physics}
\usetikzlibrary{trees, positioning, arrows.meta}
\usepackage{xcolor}
\usepackage[colorlinks=true,linkcolor=blue,urlcolor=magenta,citecolor=red]{hyperref} 
\usepackage{caption}
\usepackage{subcaption}
\usepackage{amsmath,amssymb,amsfonts}
\usepackage{amsthm}
\theoremstyle{plain}

\newtheorem{proposition}{Proposition}

\theoremstyle{definition}

\theoremstyle{remark}

\newtheorem{remark}{Remark}

\usepackage{bm}

\begin{document}

\title{A Quantum-Walk Representation of Color-Ordered MHV Scattering Amplitudes}

\author{Anirudh Verma}
\affiliation{Quantum Optics \& Quantum Information Laboratory, Dept. of Electronic Systems Engineering, Indian Institute of Science, Bengaluru 560012, India}

\author{C. M. Chandrashekar}
\affiliation{Quantum Optics \& Quantum Information Laboratory, Dept. of Electronic Systems Engineering, Indian Institute of Science, Bengaluru 560012, India}  

\begin{abstract}
\noindent We introduce a graph-theoretic framework for representing color-ordered maximally helicity violating (MHV) scattering amplitudes in quantum chromodynamics using coined quantum walks on permutation trees. Each root-to-terminal path corresponds to a distinct color ordering of the external gluons, while local transition amplitudes are assigned according to the spinor-product structure of the Parke--Taylor amplitudes. The walk evolves in coherent superpositions over permutation sectors, giving a dynamical picture of the underlying combinatorics. A quantum-channel formulation based on Kraus operators is also introduced to describe sector-resolved contributions, while a weighted collection operator coherently combines the terminal sectors at a common reference node. A quantum Fourier transform on the coin space is then employed to combine the encoded contributions into the corresponding color-decomposed amplitude. Together, these constructions establish a unified graph-based framework connecting permutation trees, quantum walks, and open quantum systems providing a framework for quantum algorithms to simulate scattering processes in quantum field theory. As an example, numerical results for low-point gluon amplitudes demonstrate that the proposed representation faithfully captures the characteristic Parke--Taylor structure and is consistent with analytical results.
\end{abstract}

\maketitle

\section{Introduction}
\noindent Scattering amplitudes play a central role in quantum field theory and high-energy physics, providing the fundamental quantities from which measurable cross sections and decay probabilities are computed \cite{Peskin1995,Weinberg1995,Schwartz2014}. In perturbative quantum chromodynamics (QCD), amplitudes involving many external gluons exhibit intricate combinatorial and algebraic structures arising from color degrees of freedom, gauge symmetry, and helicity dependence \cite{Ellis1996,ManganoParke1991}. Despite the large number of contributing Feynman diagrams, remarkable simplifications occur in specific helicity sectors, most notably in the maximally helicity violating (MHV) amplitudes discovered by Parke and Taylor \cite{ParkeTaylor1986}. These developments have led to modern approaches to scattering amplitudes based on spinor-helicity variables, recursion relations~\cite{Berends1988,BCFW2005}, twistor methods, and geometric formulations \cite{Xu1987,Dixon1996,Elvang2015,Witten2004,CHY2014}, providing compact analytic expressions for color-ordered gauge-theory amplitudes.\\
The color decomposition of gluon scattering amplitudes separates the full amplitude into color structures and kinematic partial amplitudes \cite{ManganoParke1991,Dixon1996},
\begin{equation}
M_n=\sum_{\sigma}C_\sigma A_\sigma,
\end{equation}
where $(C_\sigma)$ denotes the color factor associated with a permutation sector and $(A_\sigma)$ is the corresponding color-ordered partial amplitude. For MHV amplitudes, the Parke--Taylor formula expresses the kinematic contribution as
\begin{equation}
A(1^-,2^+,\dots,k^-,\dots,n^+)=
\frac{\langle1,k\rangle^4}
{\langle12\rangle
\langle23\rangle
\cdots
\langle n1\rangle},
\end{equation}
where the denominator consists of an ordered cyclic product of spinor inner products \cite{ParkeTaylor1986,Elvang2015}. The dependence on particle ordering gives the color decomposition an intrinsically combinatorial structure, with the number of color-ordered sectors growing factorially as the number of external particles increases.\\
Recent years have witnessed growing interest in applying ideas from quantum information to quantum field theory, including quantum algorithms for field-theoretic simulations, quantum representations of scattering processes, and quantum-computing approaches to high-energy physics \cite{Jordan2012,Klco2018,Bauer2023,Preskill2018}. More recently, highlighting the importance of quantum-information-based formulation of MHV amplitudes through quantum-circuit constructions and amplitude reconstruction algorithms~\cite{Bashore2025} has been proposed. In their proposal, quantum-circuit framework for MHV scattering amplitudes in which color, helicity, momentum, and permutation information are encoded into quantum registers and manipulated through structured unitary operations, leading to a reconstruction of the color-decomposed scattering amplitude.\\

In contrast, the present work formulates a graph-based representation in which color-ordered sectors are associated with paths on a directed permutation tree rather than encoding the scattering problem into quantum registers and circuit operations. In the graph-based model, each root-to-terminal path corresponds to a unique ordering of the external particles, while the topology of the graph reflects the combinatorial structure of the color decomposition. Within this framework, coined quantum-walk dynamics generate coherent superpositions over permutation sectors, and local transition amplitudes are chosen according to the spinor-product structure of the Parke--Taylor amplitudes. Consequently, the ordered denominator structure of the color-ordered amplitudes is represented through successive transitions on the permutation graph. Furthermore, the present framework incorporates a quantum-channel formulation for sector-resolved amplitude extraction and reconstructs the color-decomposed amplitudes using a weighted collection operator together with a quantum Fourier transform. These features provide a complementary graph-based and dynamical perspective to the circuit-based formulation of Ref.~\cite{Bashore2025}. A detailed table with comparison between the present graph-based construction and the quantum-circuit formulation of Ref.~\cite{Bashore2025} is presented in the appendix (Table~\ref{tab:comparison}).\\

The quantum walk formalism is an highly successful framework for modeling quantum quantum dynamics, such as neutrino oscillations \cite{Mallick2017,Sahu2023}, Dirac equation different dynamics in both the low and high energy regime  \cite{CM2013,Mallick2016, Kumar2018,Mallick2019}, and for developing a wide range of quantum algorithms to model dynamics on complex quantum networks, graph exploration \cite{Aharonov1993,Meyer1996,Ambainis2003,Kempe2003,Venegas2012,Portugal2013}.
The experimental demonstration of quantum walk to simulate the Dirac equation \cite{alderete2020} and as a  
powerful primitive for quantum computation \cite{Sengupta2025} further strengthens the practical relevance of the quantum walk framework for modeling scattering behavior in field theory dynamics.  Since quantum walk evolution is naturally defined on graph structures, they are well suited for describing systems in which combinatorial organization plays a fundamental role. Motivated by these developments and to provide an algorithmic framework which can be implemented on a quantum processors in near future, we present a coined quantum walk on the permutation tree to provide a dynamical representation of the color-ordering structure of MHV amplitudes.\\

To analyze the resulting graph dynamics, we introduce a quantum-channel formulation in which Kraus operators act on the terminal permutation sectors to extract sector-resolved quantities proportional to the squared color-ordered amplitudes $( |A_\sigma|^2 )$ \cite{Kraus1983,Breuer2002,Nielsen2010}. This channel description complements the underlying unitary quantum-walk evolution and establishes a connection with graph-based quantum dynamics and open quantum walks \cite{Attal2012}. We further introduce a weighted collection operator that coherently combines the terminal sectors at a common reference node while preserving their relative weights. A quantum Fourier transform acting on the coin space is then used to combine the encoded contributions into the corresponding color-decomposed scattering amplitude. This establishes a neat connection between scattering amplitudes, permutation trees, coined quantum walks, and quantum channels. Rather than viewing the color decomposition solely as an algebraic sum over permutations, it also provides a graph-theoretic representation in which the ordering structure is encoded in the geometry of the permutation tree and explored through quantum-walk dynamics. Numerical examples for low-point gluon amplitudes demonstrate that the resulting construction faithfully reproduces the characteristic Parke--Taylor structure and agrees with analytical calculations.\\

The remainder of the paper is organized as follows.In Sec.~\ref{sec:QCDA_PTS} we review the color decomposition of tree-level QCD amplitudes,the Parke--Taylor MHV amplitudes, and introduce the permutation-tree representation of the color-ordering structure. In Sec.~\ref{sec:qw_construction} we develop the coined quantum-walk framework by constructing the directed permutation graph together with globally consistent edge labeling, local coin operators, and a unitary shift operator. Section~\ref{pkt} demonstrates how the resulting quantum walk generates coherent superpositions over color-ordered permutation sectors and establishes the connection between the accumulated transition amplitudes and the Parke--Taylor denominator structure.
In Sec.~\ref{sec:CAR} we formulate the quantum-channel description, introduce the weighted collection operator and quantum Fourier transform for coherent amplitude reconstruction, and present the complete reconstruction algorithm. Finally, Sec.~\ref{sec:NV} presents numerical demonstrations for representative low-point gluon scattering amplitudes together with comparisons to the corresponding analytical results.

\section{QCD Amplitudes, Parke--Taylor Structure and Permutation Trees} \label{sec:QCDA_PTS}
\noindent In the high-energy regime,\(Q^2 \gg \Lambda_{\mathrm{QCD}}^2\), the strong coupling becomes sufficiently small for perturbative calculations to be applicable~\cite{Peskin1995,Ellis1996}. A generic tree-level QCD amplitude can be decomposed into color and kinematic contributions,
\begin{equation}
M_n=\sum_{\sigma}C_\sigma A_\sigma,
\end{equation} where \(C_\sigma\) are color factors and \(A_\sigma\) are color-ordered partial amplitudes. For pure gluon scattering, a convenient trace decomposition is given by~\cite{ManganoParke1991,Dixon1996,DelDuca2000}

\begin{equation}
M_n
=
g^{n-2}
\sum_{\sigma\in S_{n-1}}
\mathrm{Tr}
\left(
T^{a_1}
T^{a_{\sigma(2)}}
\cdots
T^{a_{\sigma(n)}}
\right)
A(1,\sigma(2,\ldots,n)).
\end{equation} This decomposition separates the group-theoretic color structure from the kinematic information while explicitly organizing the scattering amplitude according to permutations of the external particles. Consequently, the combinatorial structure of gluon scattering is naturally encoded in the set of color-ordered permutation sectors.\\
Among the different helicity configurations, the dominant tree-level contributions arise from the maximally helicity violating (MHV) sector~\cite{ParkeTaylor1986,ManganoParke1991}. For an \(n\)-gluon process with two negative-helicity gluons, the corresponding Parke--Taylor amplitude takes the compact form
\begin{equation}
A(1^-,2^+,\ldots,k^-,\ldots,n^+)
=
\frac{\langle1k\rangle^4}
{\langle12\rangle
\langle23\rangle
\cdots
\langle n1\rangle},
\end{equation} where \(\langle ij\rangle\) denotes the spinor-helicity inner product~\cite{Xu1987,Dixon1996,Elvang2015}. Throughout this work, the first and the \(k\)-th gluons carry negative helicity while the remaining gluons have positive helicity.\\
For a fixed color ordering \((1,\sigma(2),\ldots,\sigma(n))\), the corresponding color-ordered amplitude can be written as
\begin{equation}
A_\sigma
=
\frac{\langle1k\rangle^4}
{\langle1\,\sigma(2)\rangle
\langle\sigma(2)\,\sigma(3)\rangle
\cdots
\langle\sigma(n)\,1\rangle}.
\label{eq:Asigma}
\end{equation} The denominator consists of an ordered cyclic product of spinor inner products determined entirely by the chosen permutation. This ordered structure suggests a natural combinatorial representation in which each permutation is associated with a sequence of local transitions, while the complete denominator is obtained by successively traversing the corresponding ordering.\\
Motivated by this observation, we represent the set of color-ordered sectors using a permutation tree. Starting from a root node, new particles are recursively appended to generate all admissible partial orderings, producing a directed tree whose terminal nodes correspond to complete permutations. Each root-to-terminal path therefore uniquely represents a color-ordered sector of the scattering amplitude. Within this representation, the factorial growth of the permutation sectors is encoded directly in the graph topology rather than being treated as an explicit sum over permutations. The permutation tree therefore provides a natural graph on which quantum-walk dynamics can be defined, allowing coherent superpositions over different color orderings while preserving the underlying combinatorial structure of the amplitude.\\

\section{Quantum Walk Construction}
\label{sec:qw_construction}

To formulate a quantum walk~\cite{Aharonov1993,Kempe2003,Ambainis2003} that generates all color-ordered
sectors of the scattering amplitude, we first establish the
combinatorial structure of the permutation tree as a directed
graph, fix a global labeling of its edges, and use that
labeling to define both the coin and shift operators in a
mutually consistent way.\\
Before introducing the formal graph-theoretic definitions, it is useful to describe the construction intuitively.For an $n$-gluon color-ordered amplitude, particle $1$ is kept fixed while the remaining particles $\{2,\ldots,n\}$ are arranged in every possible order. Rather than generating all complete permutations directly, we build them recursively.\\
The construction begins at the root node $(1)$, representing the trivial partial ordering containing only the fixed particle. At each step, one particle that has not yet appeared is appended to the current sequence, thereby creating a child node. Repeating this procedure generates a directed tree whose vertices represent partial permutations and whose terminal nodes correspond to complete permutations of the remaining particles.\\
For example, in the four-gluon case the root node $(1)$ has three children,
\[
(1,2),\qquad (1,3), \qquad (1,4),
\]
since particles $2$, $3$, and $4$ are all initially available. From the node $(1,3)$, particle $3$ has already been used, so only particles $2$ and $4$ remain available, producing the children
\[
(1,3,2),\qquad (1,3,4).
\]
Continuing recursively eventually generates all $(n-1)!$ complete color orderings.\\
This recursive construction naturally defines a graph on which a coined quantum walk can be performed. The role of the coin operator is to distribute quantum amplitude among the admissible outgoing edges according to the local spinor-product weights, while the shift operator propagates each coin component to the corresponding child node. Consequently, every root-to-terminal path accumulates the sequence of transition amplitudes associated with one particular color ordering.

\begin{figure}[h]
\centering
\includegraphics[
width=0.5\textwidth,
keepaspectratio
]{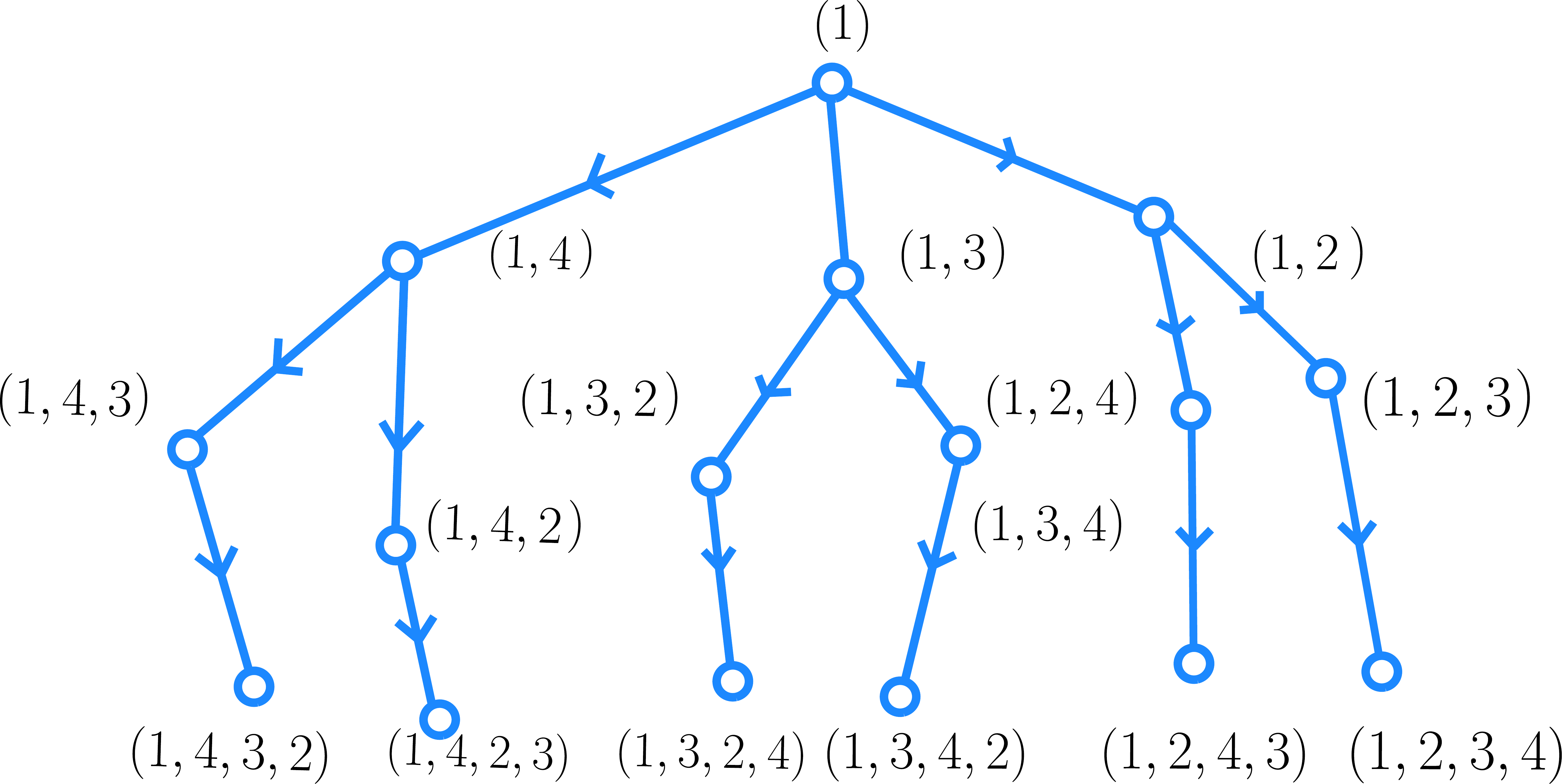}
\caption{
Directed permutation tree for the four-gluon case. Each node represents
a partial permutation
$v=(1,a_2,\ldots,a_m)$,
while every directed edge corresponds to the append operation
$v\rightarrow(v,a)$ for
$a\in R(v)$.
Every root-to-terminal path uniquely determines one complete
color ordering of the external particles.
}
\label{fig:permutation_tree}
\end{figure}

The recursive construction described above is illustrated in Fig.~\ref{fig:permutation_tree} for the four-gluon case. Every node represents a partial permutation beginning with the fixed particle $1$, while every directed edge corresponds to appending one additional particle that has not yet appeared in the sequence. Since each particle is appended exactly once, every root-to-terminal path uniquely determines one complete color ordering of the external particles.

\subsection{Directed graph and edge labeling}
\label{ssec:graph}

\paragraph{Global particle-to-coin map.}
Fix the set of external particles $\{1,2,\ldots,n\}$,
where particle~$1$ is the reference particle.
Define the \emph{global particle-to-coin map}
\begin{equation}
  c:\{2,\ldots,n\}\to\{0,\ldots,n-2\},
  \qquad
  c(j)=j-2.
  \label{eq:global_coin_map}
\end{equation}
This map is a bijection.
Throughout this work, coin value $k$ corresponds to
particle $k+2$; this assignment is fixed and independent
of the position of the walker on the graph.

\paragraph{Permutation tree as a directed graph.}
Define the directed graph $G=(V,E)$ as follows.

\begin{itemize}
  \item \textbf{Vertex set.}
  \begin{equation}
  \begin{split}
    V \;=\; & \bigl\{\,(1,a_2,\ldots,a_{m+1})
               \;\big|\;
               0\le m\le n-1,\;
               a_i\in\{2,\ldots,n\},\; \\
               & a_i\neq a_j \text{ for } i\neq j
            \bigr\}.
    \label{eq:vertex_set}
    \end{split}
  \end{equation}
  The unique node with $m=0$ is the \emph{root} $r=(1)$.
  Nodes with $m=n-1$ are \emph{terminal}; denote their set
  $\mathcal{T}\subset V$.  We have $|\mathcal{T}|=(n-1)!$ and
  $|V|=\sum_{k=0}^{n-1}(n-1)!/k!\,$.

  \item \textbf{Edge set.}
  There is a directed edge $v\to u$ whenever
  $u=(v,a)$ for some $a\in\{2,\ldots,n\}\setminus\{a_2,\ldots,a_{m+1}\}$.
  Write $\mathrm{Out}(v)$ for the set of out-neighbors of $v$.
  Non-terminal nodes satisfy $|\mathrm{Out}(v)|=n-1-m\ge 1$;
  terminal nodes satisfy $|\mathrm{Out}(t)|=0$.
\end{itemize}

\paragraph{Remaining-particle and placed-particle sets.}
For $v=(1,a_2,\ldots,a_{m+1})\in V$ define
\begin{equation}
  R(v) = \{2,\ldots,n\}\setminus\{a_2,\ldots,a_{m+1}\},
  \;\;
  P(v) = \{a_2,\ldots,a_{m+1}\},
  \label{eq:remaining}
\end{equation}
so that $\mathrm{Out}(v)=\{(v,r)\mid r\in R(v)\}$,
$|R(v)|=n-1-m$, and $R(v)\cup P(v)=\{2,\ldots,n\}$.

\paragraph{Edge-label function (global).}
\label{para:edge_label}
Using the global map~\eqref{eq:global_coin_map}, define the
\emph{edge-label function}
\begin{equation}
  \lambda\bigl(v,(v,r)\bigr) \;=\; c(r) \;=\; r-2
  \qquad
  \text{for every } r\in R(v).
  \label{eq:edge_label}
\end{equation}
Thus each outgoing edge of $v$ carries the \emph{global} coin
label of the particle it appends.
Unlike a local labeling that renumbers remaining particles at
each node, the label assigned to a particle is the same at every
node where that particle is available.
The outgoing edge labels at $v$ therefore form the set
$\{c(r):r\in R(v)\}\subseteq\{0,\ldots,n-2\}$,
which in general is a proper subset of $\{0,\ldots,n-2\}$:
labels corresponding to already-placed particles are absent.

\paragraph{Incoming-coin function.}
\label{para:kappa}
For every non-root node
$v=(1,a_2,\ldots,a_{m+1})\in V\setminus\{r\}$,
define the \emph{incoming coin label} as
\begin{equation}
  \kappa(v) \;=\; c(a_{m+1}) \;=\; a_{m+1}-2,
  \label{eq:kappa}
\end{equation}
i.e.\ the global coin label of the \emph{last appended particle}.
For the root set $\kappa(r)=0$ by convention (the walk is
initialized with coin state $|0\rangle$).

The function $\kappa$ is well-defined because every non-root node
has a unique last element.
It satisfies $\kappa(v)=\lambda\!\bigl(\mathrm{par}(v),v\bigr)$
by construction, since the edge from $\mathrm{par}(v)$ to $v$
appends particle $a_{m+1}$ and thus carries label $c(a_{m+1})$.

\paragraph{Example ($n=4$).}
The global map is $c(2)=0$, $c(3)=1$, $c(4)=2$.
At the root: $R((1))=\{2,3,4\}$, giving
$\lambda\bigl((1),(1,2)\bigr)=0$,
$\lambda\bigl((1),(1,3)\bigr)=1$,
$\lambda\bigl((1),(1,4)\bigr)=2$.
All three coin labels are present.

At node $(1,3)$: $R((1,3))=\{2,4\}$, $P((1,3))=\{3\}$, giving
$\lambda\bigl((1,3),(1,3,2)\bigr)=0$ and
$\lambda\bigl((1,3),(1,3,4)\bigr)=2$.
The label set is $\{0,2\}$; label $1$ (corresponding to
particle~$3$, already placed) is absent.

Incoming coin labels:
$\kappa\bigl((1,2)\bigr)=0$,
$\kappa\bigl((1,3)\bigr)=1$,
$\kappa\bigl((1,4)\bigr)=2$.

The directed permutation tree corresponding to the four-gluon case is
illustrated in Fig.~\ref{fig:permutation_tree}. Each node represents a
partial permutation beginning with the fixed particle $1$, while each
directed edge corresponds to appending one additional particle according
to the graph construction described above. Every root-to-terminal path
therefore represents one unique color ordering contributing to the
scattering amplitude.

\subsection{Hilbert space}
\label{ssec:hilbert}

The coined quantum walk can be described on the composite Hilbert Space consisting of 
\begin{equation}
  \mathcal{H} \;=\; \mathcal{H}_{\mathrm{pos}}
                   \otimes
                   \mathcal{H}_{\mathrm{coin}},
  \label{eq:hilbert}
\end{equation}
where the \emph{position space} is
$\mathcal{H}_{\mathrm{pos}}=\mathrm{span}\{|v\rangle:v\in V\}$
and the \emph{coin space} is
$\mathcal{H}_{\mathrm{coin}}=\mathrm{span}\{|k\rangle:k=0,\ldots,d-1\}$
with
\begin{equation}
  d \;=\; n-1.
  \label{eq:coin_dim}
\end{equation}
A basis state $|v,k\rangle=|v\rangle\otimes|k\rangle$ represents
the walker at node $v$ with coin value $k$.
For $n=4$ we have $|V|=16$, $d=3$, and
$\dim\mathcal{H}=48$.

\subsection{Coin operator}
\label{ssec:coin}

\paragraph{Local transition amplitudes.}
For each non-terminal node $v$ and each out-neighbor
$(v,r)\in\mathrm{Out}(v)$ with $r\in R(v)$, define the
\emph{raw transition weight}
\begin{equation}
  w(v,r) \;=\; \frac{1}{\langle\ell(v)\;r\rangle},
  \label{eq:raw_weight}
\end{equation}
where $\ell(v)$ is the last particle label in the sequence $v$.
Define the local normalization constant
\begin{equation}
  \alpha_v \;=\;
  \sqrt{\sum_{r\in R(v)}|w(v,r)|^2}\,,
  \label{eq:alpha}
\end{equation}
and the normalized amplitude
\begin{equation}
  \tilde{v}_{v,r} \;=\; \frac{w(v,r)}{\alpha_v}.
  \label{eq:norm_amp}
\end{equation}
By construction $\sum_{r\in R(v)}|\tilde{v}_{v,r}|^2=1$.

\paragraph{Local coin operator.}
\label{para:local_coin}
For each non-terminal $v$, define a $d\times d$ unitary matrix
$C_v\in U(d)$ satisfying the following column condition.

Let $\kappa(v)$ be the incoming coin label
(defined in \eqref{eq:kappa}).  Require that the
$\kappa(v)$-th column of $C_v$ encodes the normalized
transition amplitudes, placed at the row corresponding to
each remaining particle's global coin label:
\begin{equation}
  (C_v)_{c(r),\,\kappa(v)} \;=\; \tilde{v}_{v,r}
  \quad\text{for every } r\in R(v),
  \label{eq:coin_column}
\end{equation}
with $(C_v)_{k,\,\kappa(v)}=0$ for every
$k\in\{c(a):a\in P(v)\}$ (rows corresponding to
already-placed particles are zero).  The remaining
$d-1$ columns of $C_v$ are completed to a unitary matrix by
Gram--Schmidt orthonormalization.

\begin{remark}[Coin--shift consistency]
When the walker arrives at $v$ carrying coin state
$|\kappa(v)\rangle$, the coin operator $C_v$ maps
$|\kappa(v)\rangle$ to a superposition
$\sum_{r\in R(v)}\tilde{v}_{v,r}\,|c(r)\rangle$
over exactly those coin values whose corresponding particles
are still available.  The shift operator (defined below) then
routes each component $|c(r)\rangle$ to the child $(v,r)$.
In particular, the coin operator places zero amplitude in
$|\kappa(v)\rangle$ itself (since particle $\kappa(v)+2$ is
already the last element of $v$ and hence in $P(v)$), so the
walker is always scattered away from the incoming coin direction.
\end{remark}

For terminal nodes $t\in\mathcal{T}$, no forward transition is
available; set $C_t=\mathbf{1}_d$ (identity).

\paragraph{Global coin operator.}
The full coin operator acts locally on each node:
\begin{equation}
  C \;=\; \bigoplus_{v\in V} C_v \;\in\; U(d|V|).
  \label{eq:global_coin}
\end{equation}
Its action on a basis state is
$C|v,k\rangle=\sum_{j=0}^{d-1}(C_v)_{jk}\,|v,j\rangle$.

\subsection{Shift operator}
\label{ssec:shift}

The shift operator $S$ implements the conditional propagation
of the walker along the directed edges of $G$~\cite{Kempe2003,Portugal2013}.
We define it as a permutation
$\tau:V\times\{0,\ldots,d-1\}\to V\times\{0,\ldots,d-1\}$
and set $S|v,k\rangle=|\tau(v,k)\rangle$.

\paragraph{Partition of basis states.}
Fix a coin sector $k\in\{0,\ldots,d-1\}$ and let $p=k+2$
denote the corresponding particle.
Partition $V\times\{k\}$ into three disjoint sets:
\begin{align}
  F_k &= \bigl\{(v,k)\;\big|\;
          v\notin\mathcal{T},\;
          p\in R(v)
         \bigr\}, \label{eq:Fk}\\[4pt]
  L_k &= \bigl\{(v,k)\;\big|\;
          v\notin\mathcal{T},\;
          p\in P(v),\;
          k\neq\kappa(v)
         \bigr\}, \label{eq:Lk}\\[4pt]
  D_k &= \bigl\{(v,k)\;\big|\;
          v\in\mathcal{T}
         \bigr\}
         \;\cup\;
         \bigl\{(v,k)\;\big|\;
          v\notin\mathcal{T},\;
          k=\kappa(v),\;
          v\neq r
         \bigr\}. \label{eq:Dk}
\end{align}
$F_k$ (\emph{forward}) contains non-terminal states where
particle $p$ has not yet been placed.
$L_k$ (\emph{loop}) contains non-terminal states where particle
$p$ has been placed but $k$ is not the incoming coin label.
$D_k$ (\emph{displaced}) contains all terminal states together
with non-root states where $k$ equals the incoming coin label
(i.e.\ particle $p$ was the last particle appended).

\begin{remark}[Physical interpretation of $D_k$]
\label{rem:Dk}
A non-terminal state $(v,\kappa(v))$ falls in $D_k$ rather than
$L_k$ because assigning it a self-loop would create a
non-injectivity: the forward rule already maps
$|\mathrm{par}(v),\kappa(v)\rangle\to|v,\kappa(v)\rangle$,
so a self-loop at $(v,\kappa(v))$ would give
$|v,\kappa(v)\rangle$ two pre-images.
Placing these states in $D_k$ (alongside terminal states) and
mapping them to otherwise-uncovered targets eliminates this
conflict and makes $S$ a bijection.
During the first $n-1$ walk steps, these states are never populated
because the coin operator always scatters the walker away from
the incoming coin direction
($(C_v)_{\kappa(v),\kappa(v)}=0$ ; see~\eqref{eq:coin_column}),
so they are dynamically equivalent to self-loops.
This explains why the graph figures depict self-loop arrows at
these positions.
\end{remark}

\paragraph{Shift rules.}
The permutation $\tau$ is defined by:
\begin{equation}
  \boxed{
  \tau(v,k) \;=\;
  \begin{cases}
    \bigl((v,\,k{+}2),\;k\bigr)
      & \text{if } (v,k)\in F_k
        \;\;\text{(append} \\ & \text{ particle } k{+}2\text{)},\\[4pt]
    (v,\,k)
      & \text{if } (v,k)\in L_k
        \;\;\text{(self-loop)},\\[4pt]
    \bigl(\varphi_k(v),\;k\bigr)
      & \text{if } (v,k)\in D_k
        \;\;\text{(displaced;} \\ &\text{ reassignment)},
  \end{cases}
  }
  \label{eq:shift_complete}
\end{equation}
where $\varphi_k$ is a bijection from the source set of $D_k$
to the set $\mathcal{U}_k$ of uncovered targets defined below.

For the four-gluon example, the partition
\begin{equation}
V\times\{k\}=F_k\cup L_k\cup D_k
\end{equation}
can be evaluated explicitly for every node of the permutation tree.
The complete classification of all basis states is provided in
Appendix~\ref{app:shift_classification}.
This explicit example illustrates how every basis state belongs to
exactly one of the three subsets used in the construction of the
unitary shift operator and serves as a direct verification of the
algorithm presented above.

\paragraph{Uncovered targets.}
A state $(w,k)$ is an \emph{uncovered target} if it is neither
a forward-rule image nor a self-loop fixed point:
\begin{equation}
  \mathcal{U}_k =
  \bigl\{(w,k)\;\big|\;
    \kappa(w)\neq k\text{ or }w=r
  \bigr\}
  \setminus L_k.
  \label{eq:uncovered}
\end{equation}
A state $(w,k)$ is a forward-rule image if and only if
$\kappa(w)=k$ (the last element of $w$ is particle $k{+}2$)
and $w\neq r$.
A state in $L_k$ is its own image.
The uncovered targets are everything else.

\begin{proposition}[Bijectivity]
\label{prop:S_unitary}
$|D_k|=|\mathcal{U}_k|$ for every $k$, so a bijection
$\varphi_k:D_k\to\mathcal{U}_k$ exists.
The resulting map $\tau$ is a permutation on
$V\times\{0,\ldots,d{-}1\}$, and $S$ is unitary.
\end{proposition}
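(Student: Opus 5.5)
Fix a coin sector $k$ and write $p=k+2$. Since $\tau$ preserves the coin value in every branch of \eqref{eq:shift_complete}, it suffices to show that the restriction $\tau_k$ of $\tau$ to $V\times\{k\}$ is a bijection of that finite set onto itself. The union of these permutations over $k$ is then a permutation of $V\times\{0,\ldots,d-1\}$. The operator $S$ permutes an orthonormal basis, so it is a permutation matrix and therefore unitary.

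The argument is a counting argument built on disjoint decompositions of domain and codomain. On the domain side, $V\times\{k\}=F_k\sqcup L_k\sqcup D_k$. To verify this, split by cases:
\begin{itemize}
\item $v$ terminal gives $D_k$;
\item $v$ non-terminal with $p\in R(v)$ gives $F_k$, and here $\kappa(v)\neq k$ automatically because the last element of $v$ lies in $P(v)$;
\item $v$ non-terminal with $p\in P(v)$ lies in $L_k$ if $k\neq\kappa(v)$ and in $D_k$ if $k=\kappa(v)$.
\end{itemize}
For the root, $P(r)=\emptyset$, so $(r,k)\in F_k$ always. The clause $v\neq r$ in $D_k$ is therefore harmless.

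On the codomain side, let $\mathcal{I}_k=\{(w,k): w\neq r,\ \kappa(w)=k\}$. I will show three things.
\begin{enumerate}
\item The forward rule is a bijection $F_k\to\mathcal{I}_k$. It is injective because $(v,p)$ determines $v$ as its parent. It is surjective because any $w\neq r$ with last element $p$ equals $(\mathrm{par}(w),p)$, and $(\mathrm{par}(w),k)\in F_k$ since $p\notin P(\mathrm{par}(w))$ and $\mathrm{par}(w)$ is non-terminal.
\item The self-loop rule is the identity on $L_k$.
\item $\mathcal{I}_k\cap L_k=\emptyset$, because $L_k$ requires $k\neq\kappa(v)$.
\end{enumerate}
By definition \eqref{eq:uncovered}, $\mathcal{U}_k=(V\times\{k\})\setminus(\mathcal{I}_k\cup L_k)$. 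Hence $V\times\{k\}=\mathcal{I}_k\sqcup L_k\sqcup\mathcal{U}_k$.

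Comparing the two decompositions gives
\[
|D_k|=|V|-|F_k|-|L_k|=|V|-|\mathcal{I}_k|-|L_k|=|\mathcal{U}_k|.
\]
So a bijection $\varphi_k:D_k\to\mathcal{U}_k$ exists; for instance, order both sets lexicographically and match them in order. With any such choice, $\tau_k$ is the disjoint union of three bijections:
\[
F_k\to\mathcal{I}_k,\qquad L_k\to L_k,\qquad D_k\to\mathcal{U}_k,
\]
whose images partition $V\times\{k\}$. Therefore $\tau_k$ is a permutation.

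I expect the main obstacle to be the bookkeeping at the boundary cases rather than any real difficulty. These are the root convention $\kappa(r)=0$, which must not place $(r,0)$ in $\mathcal{I}_0$ (this is why $w\neq r$ is excluded), and the distinction between terminal states in $D_k$ and those in $F_k$ or $L_k$. I would check these against the explicit $n=4$ classification in Appendix~\ref{app:shift_classification}.
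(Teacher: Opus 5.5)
Your proof is correct, and it is more complete than the paper's. Both arguments decompose the domain $V\times\{k\}$ into $F_k\sqcup L_k\sqcup D_k$, note that the forward rule is injective and the loop rule is the identity, and separate the codomain into forward images, $L_k$, and $\mathcal{U}_k$.

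The paper then checks only that the three image sets are pairwise disjoint, and in the appendix concludes via injectivity plus pigeonhole. It takes $\varphi_k$ to be a bijection onto $\mathcal{U}_k$ ``by definition''. So the first assertion of the proposition, $|D_k|=|\mathcal{U}_k|$, is presupposed rather than proved. The paper even writes $\mathrm{Im}(F_k)=\mathcal{I}_k$, but never uses this to count.

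Your route supplies exactly the missing step. You prove surjectivity of the forward rule onto $\mathcal{I}_k$: every non-root $w$ ending in $p$ equals $(\mathrm{par}(w),p)$ with $(\mathrm{par}(w),k)\in F_k$. You also observe that $\mathcal{U}_k$ is literally the complement of $\mathcal{I}_k\sqcup L_k$. This gives two tripartite decompositions of the same finite set, so the cardinality equality follows by subtraction. Bijectivity of $\tau_k$ then follows directly, with no pigeonhole needed. You also actually verify the source partition, which the paper only asserts.

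Two small remarks. First, the clause $v\neq r$ in $D_k$ is not merely harmless but necessary. Under the convention $\kappa(r)=0$, it is what keeps $(r,0)$ out of $D_0$; your case split places root states in $F_k$ correctly, so nothing breaks.

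Second, for the $n=4$ sanity check, use the per-node classification in Table~\ref{tab:shift_classification}. It gives $|F_k|=|\mathcal{I}_k|=5$, $|L_k|=2$ and $|D_k|=|\mathcal{U}_k|=9$, consistent with your argument. The counts $6,3,7$ listed in Table~\ref{tab:shift_counts} do not match the paper's own per-node classification, so do not be misled into thinking $F_k\to\mathcal{I}_k$ fails.
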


\begin{proof}
The three source sets $F_k$, $L_k$, $D_k$ partition
$V\times\{k\}$.
The forward rule is injective (each child has a unique parent
per coin label), the self-loop rule is the identity on $L_k$,
and $\varphi_k$ is a bijection onto $\mathcal{U}_k$ by
definition.
It remains to show the three image sets are disjoint.
\begin{itemize}
\item Forward images are states $(u,k)$ with $\kappa(u)=k$ and
$u\neq r$.
\item Self-loop images are the states in $L_k$, which satisfy
$k\neq\kappa(v)$.
\item Uncovered targets satisfy $\kappa(w)\neq k$ (or $w=r$)
and $(w,k)\notin L_k$.
\end{itemize}
The condition $\kappa(u)=k$ versus $k\neq\kappa(v)$ versus
the complement ensures disjointness.
Since the three image sets are disjoint and each is injectively
covered, $\tau$ is a bijection on $V\times\{k\}$.
This holds for every $k$, so $\tau$ is a permutation on the full
basis and $S$ is unitary.
\end{proof}

\paragraph{Canonical choice of $\varphi_k$.}
To make $S$ fully explicit, we index the elements of $D_k$ and
$\mathcal{U}_k$ in lexicographic order of the node sequences
and set $\varphi_k(i\text{-th source})=i\text{-th target}$.

\paragraph{Algorithm for constructing the shift operator.}

For every basis state $|v,k\rangle$:

\begin{enumerate}
\item Compute the particle label
      \[
      r=c^{-1}(k).
      \]

\item If $r\in R(v)$, move the walker to the child obtained by appending
      particle $r$,
      \[
      S|v,k\rangle=|(v,r),k\rangle.
      \]

\item Otherwise, apply the bijection $\phi_k$ to complete the permutation,
      \[
      S|v,k\rangle=|\phi_k(v),k\rangle.
      \]
\end{enumerate}
The shift operator therefore acts by interpreting the coin label through the
fixed particle-to-coin map. Whenever the corresponding particle has not yet
appeared in the current partial permutation, the walker propagates along the
unique child obtained by appending that particle. Otherwise, the state belongs
to the displaced subspace and is mapped through the bijection $\phi_k$,
ensuring that the complete shift operator is a permutation on the Hilbert
space and hence unitary.

\begin{figure}[t]
\centering
\includegraphics[width=0.78\linewidth]{shift_operator2.pdf}
\caption{
Schematic illustration of the local construction of the shift operator
for the four-gluon permutation tree.
Blue edges represent the forward transitions associated with the subset
$F_k$, while gray self-loops correspond to basis states belonging to the
loop subset $L_k$.
The red labels indicate the particle appended during each forward
transition, whereas the black labels denote the associated coin values.
For clarity, the displaced subset $D_k$ is not shown explicitly.
Its action is completed by choosing an arbitrary bijection onto the
remaining unassigned basis states of
$V\times\{0,\ldots,d-1\}$, thereby extending the local construction to a
permutation of the complete Hilbert-space basis and ensuring the
unitarity of the shift operator.
}
\label{fig:shift_construction}
\end{figure}
Figure~\ref{fig:shift_construction} illustrates only the local action of
the shift operator on the subsets $F_k$ and $L_k$. The forward
transitions and self-loops uniquely follow from the graph structure and
the partition defined in Eqs.~(\ref{eq:Fk})--(\ref{eq:Dk}). The action on
the displaced subset $D_k$ is intentionally omitted since it is not
unique. As given in Proposition~\ref{prop:S_unitary}, any bijection
between the displaced states and the remaining unassigned basis states
produces a valid permutation of
$V\times\{0,\ldots,d-1\}$ and therefore defines an equivalent unitary
shift operator.

\begin{remark}
The physics of the amplitude reconstruction is independent of
the choice of $\varphi_k$.  During the first $n{-}1$ walk steps,
every state in $D_k$ at a non-terminal node has zero amplitude
(the coin operator places no amplitude in the incoming coin
direction), and terminal states are reached only at step $n{-}1$.
Since the reassignment rule first acts at step $n$ and beyond,
the reconstruction is $\varphi_k$-independent.
\end{remark}

\begin{table}[h]
\centering
\caption{Summary of the notation used in the permutation-tree construction.}
\begin{tabular}{ll}
\hline
Symbol & Meaning \\
\hline
$V$ & Set of vertices (partial permutations) \\
$E$ & Set of directed edges \\
$r=(1)$ & Root node \\
$T$ & Set of terminal nodes (complete permutations) \\
$R(v)$ & Remaining particles that have not yet appeared in $v$ \\
$P(v)$ & Particles already contained in $v$ \\
$\mathrm{Out}(v)$ & Set of children of $v$ \\
$c(j)$ & Global particle-to-coin map \\
$\lambda(v,u)$ & Coin label assigned to edge $v\rightarrow u$ \\
$\kappa(v)$ & Incoming coin label of node $v$ \\
$F_k$ & Forward states for coin value $k$ \\
$L_k$ & Loop states for coin value $k$ \\
$D_k$ & Displaced states used to complete the shift permutation \\
$\phi_k$ & Bijection completing the shift operator \\
\hline
\end{tabular}
\label{tab:notation}
\end{table}

\subsection{Walk operator and initial state}
\label{ssec:walk_op}

The walker is initialized at the root with coin state $|0\rangle$:
\begin{equation}
  |\psi_0\rangle \;=\; |(1)\rangle\otimes|0\rangle.
  \label{eq:initial_state}
\end{equation}
The coined quantum-walk operator is
\begin{equation}
  U \;=\; S\,C,
  \label{eq:walk_op}
\end{equation}
where $C$ is the global coin operator \eqref{eq:global_coin}
and $S$ is the shift \eqref{eq:shift_complete}.
After $t$ steps the state of the walker is
\begin{equation}
  |\psi_t\rangle \;=\; U^t|\psi_0\rangle \;=\; (SC)^t|\psi_0\rangle.
  \label{eq:walk_state}
\end{equation}
Since both $C$ and $S$ are unitary, $U$ is unitary for all $t$.
After $n-1$ steps --- the depth of the permutation tree --- the
walker reaches the terminal permutation sectors,
\begin{equation}
  |\psi_{\mathrm{fin}}\rangle
  \;=\; U^{n-1}|\psi_0\rangle
  \;=\; \sum_{\sigma\in S_{n-1}} \psi_\sigma\,|\sigma\rangle
        \otimes|\kappa_\sigma\rangle,
  \label{eq:final_state}
\end{equation}
where each terminal node $\sigma=(1,\sigma(2),\ldots,\sigma(n))$
uniquely labels a color-ordered permutation sector, and
$\kappa_\sigma=c(\sigma(n))=\sigma(n)-2$ is the global coin label
of the last appended particle.
The derivation of the explicit form of the path amplitudes
$\psi_\sigma$ and their correspondence to the Parke--Taylor
denominator is given in Appendix~\ref{coh_evo}.

\section{Amplitude Reconstruction}\label{pkt} The coined quantum walk on the permutation tree
reconstructs the color-ordered Parke--Taylor amplitudes.
\subsection{Reconstruction of Parke--Taylor Amplitudes}
\noindent Starting from the initial state $\ket{\psi_0} = \ket{(1)}\otimes \ket{0}$ the walker evolves under repeated application of the coined quantum-walk operator $U=SC$ where \(C\) denotes the local coin operator and \(S\) is the conditional shift operator. After \(n-1\) steps, corresponding to the depth of the permutation tree, the walker reaches the terminal permutation sectors,$ \ket{\psi_{\mathrm{fin}}} = (SC)^{n-1} \ket{\psi_0}$. As shown in Appendix \ref{coh_evo}, the repeated action of the local coin and  shift operators generates a coherent superposition over every root-to-terminal path, 
\begin{equation}
\ket{\psi_{\mathrm{fin}}} =\sum_{\sigma\in S_{n-1}} \psi_\sigma \,\ket{\sigma} \otimes \ket{k_\sigma},\label{eq:psi_final_main}
\end{equation} where each terminal node uniquely labels a permutation sector.\\
The path amplitudes are derived in Appendix~E by tracking the
product of coin-matrix entries accumulated along each
root-to-terminal path.  The result is
\begin{equation}
  \psi_\sigma
  \;=\;
  \frac{1}{\displaystyle\prod_{r=1}^{n-1}\alpha_{\sigma(r)}}
  \cdot
  \frac{1}{\langle 1\,\sigma(2)\rangle
            \langle\sigma(2)\,\sigma(3)\rangle
            \cdots
            \langle\sigma(n)\,1\rangle},
  \label{eq:psi_sigma_explicit}
\end{equation}
where $\alpha_{\sigma(r)}$ are the local normalization constants
defined in~\eqref{eq:alpha}.
Comparing with the Parke--Taylor amplitude~\eqref{eq:Asigma},
one obtains the explicit proportionality
\begin{equation}
  \psi_\sigma
  \;=\;
  \frac{\langle\sigma(n)\,1\rangle}
       {\langle 1\tilde{k}\rangle^4\;
        \displaystyle\prod_{r=1}^{n-1}\alpha_{\sigma(r)}}
  \;A_\sigma,
  \label{eq:psi_Asigma}
\end{equation}
so that $\psi_\sigma\propto A_\sigma$, with the proportionality
constant fully determined by known kinematic factors.
The detailed recursive derivation is presented in Appendix~\ref{coh_evo}.

\subsection{Open Quantum-System Formulation}
To extract the contribution of each terminal permutation sector,
we construct the density operator
\begin{equation}
  \rho \;=\; |\psi_{\mathrm{fin}}\rangle\langle\psi_{\mathrm{fin}}|
  \label{eq:density_op}
\end{equation}
and introduce a collection of Kraus operators~\cite{Kraus1983,%
Breuer2002,Nielsen2010} acting on the terminal
nodes, $K_\sigma = c_\sigma|\sigma\rangle\langle\sigma|\otimes
I_{\mathrm{coin}}$, where the coefficients $c_\sigma$ are chosen
such that $\sum_\sigma K_\sigma^\dagger K_\sigma\le I$.
The explicit form of $c_\sigma$ is given in Appendix~\ref{KC_rep},
Eq.~(F5).
The corresponding completely positive trace-non-increasing map is
$\rho'=\sum_\sigma K_\sigma\rho K_\sigma^\dagger$.

As shown in Appendix~F, this channel removes coherences between
different permutation sectors while preserving sector-resolved
probabilities, yielding
\begin{equation}
  \rho'
  \;=\;
  \sum_\sigma
  \frac{|A_\sigma|^2}
       {|M|^2\,\displaystyle\prod_{r=1}^{n-1}|\alpha_{\sigma(r)}|^2}
  \;|\sigma\rangle\langle\sigma|\otimes|k_\sigma\rangle\langle k_\sigma|.
  \label{eq:rho_prime}
\end{equation}
Using the proportionality relation~\eqref{eq:psi_Asigma}, the
diagonal entries satisfy
\begin{equation}
  (\rho')_{\sigma\sigma} \;\propto\; |A_\sigma|^2,
  \label{eq:rho_diag}
\end{equation}
establishing a direct correspondence between the quantum-walk
probabilities and the squared Parke--Taylor amplitudes.
This channel description complements the unitary walk evolution
and establishes a connection with open quantum
walks~\cite{Attal2012,Sinayskiy2012}.

\section{Coherent Amplitude Reconstruction} \label{sec:CAR}

\noindent The Kraus-channel construction extracts the individual Parke--Taylor contributions associated with each terminal permutation sector. To recover the full scattering amplitude, these sectors must be recombined coherently.

\subsection{Physical origin of the weighted collection operator}

The need for a separate collection step arises from a structural asymmetry between the kinematic and color contributions to the scattering amplitude. The kinematic structure --- the ordered cyclic product of spinor brackets forming the Parke--Taylor denominator --- is \emph{local}: each factor $\langle\ell(v)\;j'\rangle^{-1}$ depends only on the current node and the particle being appended. It is therefore accumulated naturally, edge by edge, during the quantum walk through the coin operator weights $v_{v,u_j}$. This is why the walk state $|\psi_{\mathrm{fin}}\rangle$ already encodes the kinematic partial amplitudes $A_\sigma$ (up to the normalization factors $\alpha_{\sigma(r)}$), as established in~\eqref{eq:psi_Asigma}.

The color structure, by contrast, is \emph{global}: the trace factor $\mathrm{Tr}(T^{a_1}T^{a_{\sigma(2)}}\cdots T^{a_{\sigma(n)}})$ depends on the \emph{complete} permutation $\sigma$, not on any single edge. Consequently, color weights cannot be assigned locally during the walk --- they require knowledge of the full root-to-terminal path, which is available only after the walker has reached a terminal node.

The weighted collection operator $W^T$ resolves this asymmetry. It acts exclusively on the terminal sectors and performs two operations simultaneously: (i) it attaches the color weight $\mathrm{Tr}(T^{a_1}T^{a_{\sigma(2)}}\cdots T^{a_{\sigma(n)}})$ and any remaining kinematic correction to each permutation sector, and (ii) it transfers all terminal sectors to a common reference node $R$, enabling subsequent interference. The transfer to a common node is necessary because quantum-mechanical interference requires superposed amplitudes to occupy the same spatial mode; without collection, the amplitudes reside at distinct terminal nodes and cannot interfere.

After collection, all path-dependent information is stored in the coin degree of freedom. The coin state $|k_\sigma\rangle$ uniquely identifies which permutation sector contributed each amplitude, so a quantum Fourier transform on the coin space produces the coherent superposition required for the full color-decomposed amplitude.

\subsection{Action on the terminal walk state}

The explicit construction of $W^T$ is given in Appendix~\ref{WCO}.
Acting on the final walk state,
\begin{equation}
W^T \ket{\psi_{\mathrm{fin}}} =
|R\rangle
\otimes
\sum_{\sigma\in\mathcal T}
c_\sigma
\,
|k_\sigma\rangle,
\label{eq:collection_main}
\end{equation} where
\begin{equation}
c_\sigma
=
\frac{
\mathrm{Tr}
\!\left(
T^{a_1}
T^{a_{\sigma(2)}}
\cdots
T^{a_{\sigma(n)}}
\right)
}
{\mathcal N}
\,
\frac{
A_\sigma
}
{
\prod_r
\alpha_{\sigma(r)}
}.
\end{equation}

\begin{remark}[Normalization factors]
\label{rem:normalization}
The coefficients $c_\sigma$ contain the factor $1/\prod_r\alpha_{\sigma(r)}$, which varies across permutation sectors because the local normalization constants $\alpha_v$ depend on the spinor configuration at each node. These factors arise from the requirement that each local coin operator $C_v$ be unitary: the physical transition amplitudes $w(v,u_j)$ are in general non-normalized, and the rescaling by $\alpha_v$ is the cost of embedding them into a unitary matrix. While these factors do not cancel in the coherent sum, they are fully determined by the input kinematics and can be computed classically in $O(n!)$ operations or absorbed into the normalization constant $\mathcal{N}$.
\end{remark}

\begin{remark}[Implementability of $W^T$]
\label{rem:implementability}
The weighted collection operator $W^T$ is bounded ($|a_\sigma|\le 1$) but not unitary, since it maps $|\mathcal{T}|=(n-1)!$ orthogonal terminal states to a single reference node. In a physical quantum implementation, $W^T$ would need to be realized through an ancilla-assisted unitary dilation~\cite{Nielsen2010}, embedding the non-unitary map into a larger unitary operator on an extended Hilbert space, followed by post-selection on the ancilla. The success probability of this post-selection decreases with the number of permutation sectors, which is a limitation shared by other amplitude-embedding approaches including the quantum circuit construction of Ref.~\cite{Bashore2025}.
\end{remark}

Thus, all terminal permutation sectors are collected at the same reference node while the interference information remains encoded in the coin degrees of freedom. To recover the coherent sum over permutation sectors we apply the Quantum Fourier Transform (QFT)~\cite{Shor1994,Nielsen2010} on the coin space and project onto the Fourier vacuum mode $|0\rangle$.

\begin{remark}[Choice of mixing unitary]
The reconstruction requires projecting onto a coin state that has uniform overlap with all coin basis states. Since $\langle0|U_{\mathrm{QFT}}|k\rangle=1/\sqrt{d}$ for every $k$, the QFT followed by projection onto $|0\rangle$ produces an equal-weight coherent sum over all permutation sectors. Any $d\times d$ unitary whose first row is $(1/\sqrt{d},\ldots,1/\sqrt{d})$ would achieve the same result; the QFT is adopted as the standard canonical choice.
\end{remark}

The resulting coherent color-weighted sum satisfies
\begin{equation}
|\mathcal M_{\rm QW}|^2
\propto
|\sum_{\sigma\in S_{n-1}}
\mathrm{Tr}
\left(
T^{a_1}
T^{a_{\sigma(2)}}
\cdots
T^{a_{\sigma(n)}}
\right)
A_\sigma|^2.
\end{equation} The detailed derivation of the weighted collection operator, its action on the terminal walk state, and the Quantum Fourier Transform reconstruction are presented in Appendices~\ref{WCO} and~\ref{QFT_AR}.

\section{Numerical Validation} \label{sec:NV}

In this section, we present numerical results for the four-gluon MHV scattering amplitude obtained using the quantum walk framework developed in the previous sections. The simulations serve three complementary purposes. First, they demonstrate the propagation of probability amplitude through the permutation tree. Second, they illustrate the extraction of sector-resolved Parke--Taylor weights through the Kraus-channel formulation. Finally, they validate the numerical reconstruction of the color-ordered scattering contributions.

\subsection{Quantum Walk on the Permutation Tree}

The walk is initialized at the root of the permutation tree according to

\begin{equation}
\ket{\psi_0}=\ket{(1)}\otimes
\ket{0},
\label{eq:initial_state_num}
\end{equation}

where the first register labels the position of the walker on the permutation graph and the second register denotes the coin degree of freedom.

The evolution is generated through repeated applications of the unitary operator

\begin{equation}
U = SC,
\label{eq:walk_operator_num}
\end{equation}

where $C$ denotes the local coin operation and $S$ is the conditional shift operator. After $t$ steps the state of the walker is

\begin{equation}
\ket{\psi_t}=U^t\ket{\psi_0}.\label{eq:walk_state_t}
\end{equation}

For the four-gluon case, the walker reaches the terminal sectors after three steps. Figure~\ref{fig:permutation_tree} shows the permutation tree employed in the simulations. Each root-to-terminal path corresponds to a unique color ordering.


The probability distribution at time $t$ is obtained from

\begin{equation}
P_t(i)=\sum_{c}\left|\langle i,c|\psi_t\rangle
\right|^2,
\label{eq:position_probability}
\end{equation}

where $i$ labels the position basis and $c$ labels the coin state. The evolution of the walker probability distribution is shown in Fig.~\ref{fig:walk_evolution}.

\begin{figure}[h]
\centering
\includegraphics[
width=0.5\textwidth,
keepaspectratio
]{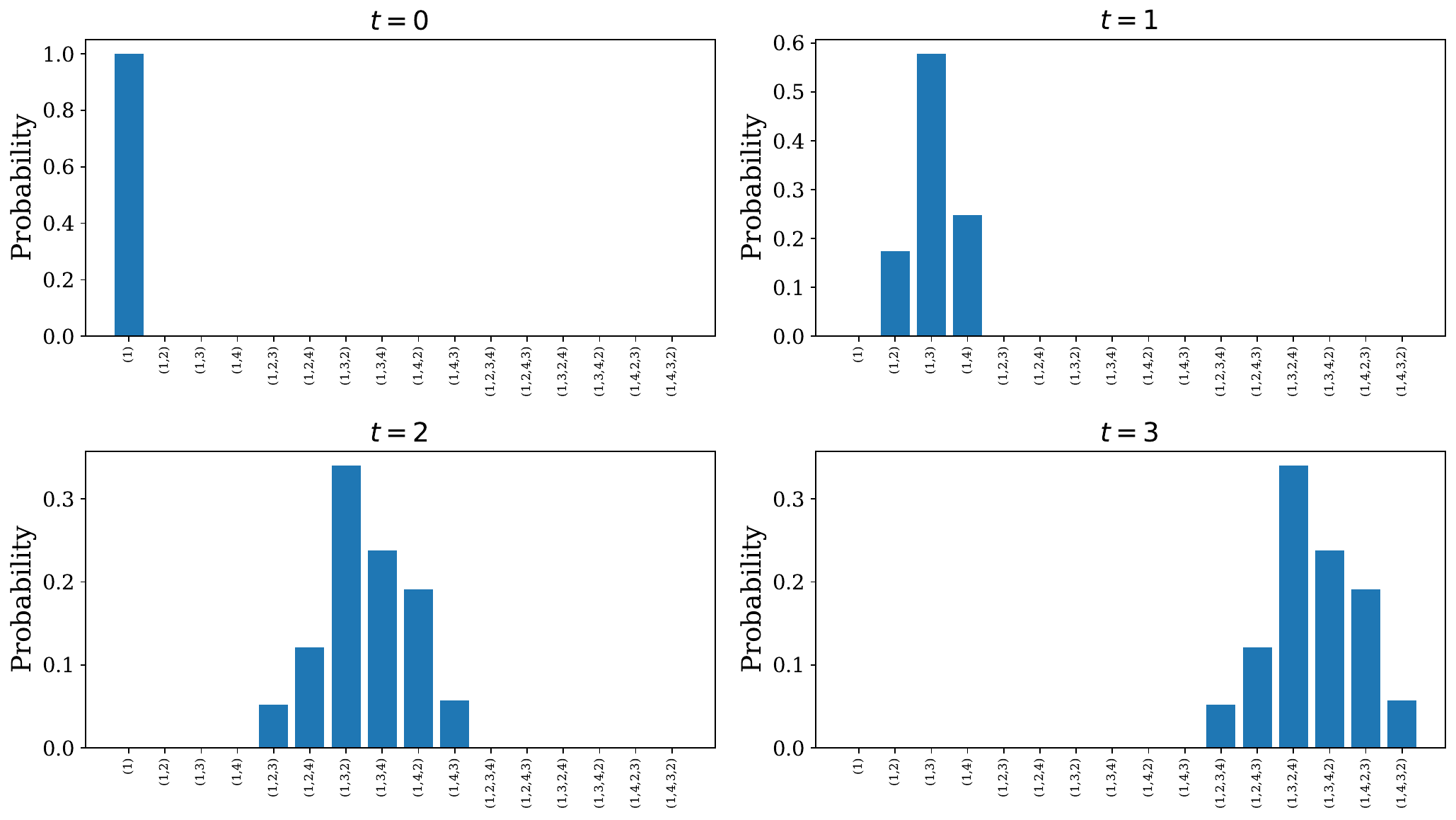}
\caption{
Probability distribution of the quantum walker after successive applications of the evolution operator $U=SC$. The final step corresponds to the terminal permutation sectors.
}
\label{fig:walk_evolution}
\end{figure}

\subsection{Terminal Permutation Sectors}

After three walk steps the state takes the form

\begin{equation}
\ket{\psi_{\mathrm{fin}}}=\sum_{\sigma\in\mathcal T}\psi_\sigma
\ket{\sigma}
\otimes
\ket{k_\sigma},
\label{eq:final_state_num}
\end{equation}

where $\mathcal T$ denotes the set of terminal permutation sectors and $\psi_\sigma$ are the corresponding complex amplitudes.

The occupation probability associated with a terminal sector $\sigma$ is

\begin{equation}
P(\sigma)=|\psi_\sigma|^2.
\label{eq:terminal_probability}
\end{equation}

Figure~\ref{fig:terminal_distribution} shows the resulting terminal-state distribution. Only terminal sectors possess nonzero probability, confirming that the walk correctly generates the complete set of color orderings.

\begin{figure}[h]
\centering
\includegraphics[
width=0.5\textwidth,
keepaspectratio
]{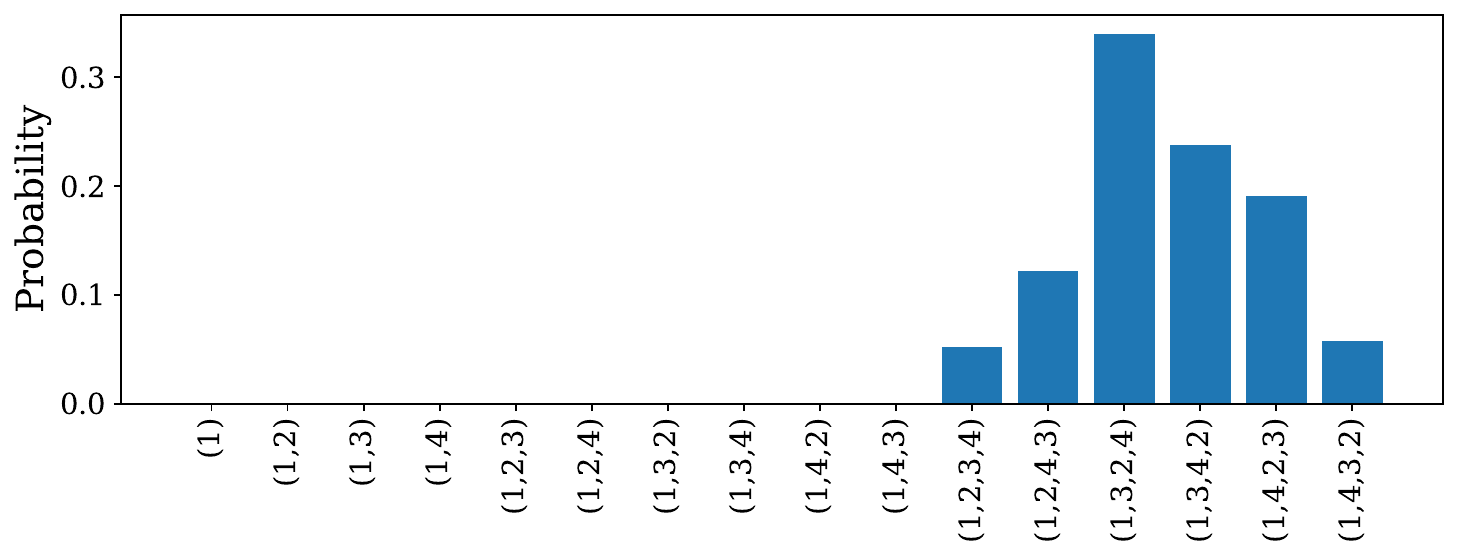}
\caption{
Occupation probabilities of the terminal permutation sectors after completion of the quantum walk.
}
\label{fig:terminal_distribution}
\end{figure}

The amplitudes $\psi_\sigma$ retain the relative phases accumulated during the walk and therefore encode the interference structure required for scattering-amplitude reconstruction.

\subsection{Extraction of Parke--Taylor Weights}

To extract sector-resolved probabilities we construct the density matrix

\begin{equation}
\rho=\ket{\psi_{\mathrm{fin}}}\bra{\psi_{\mathrm{fin}}}.
\label{eq:density_before_num}
\end{equation}

Figure~\ref{fig:density_before} shows the density matrix restricted to the occupied terminal sectors. The visible off-diagonal elements indicate coherence between different permutation sectors.

\begin{figure}[h]
\centering
\includegraphics[
width=0.5\textwidth,
keepaspectratio
]{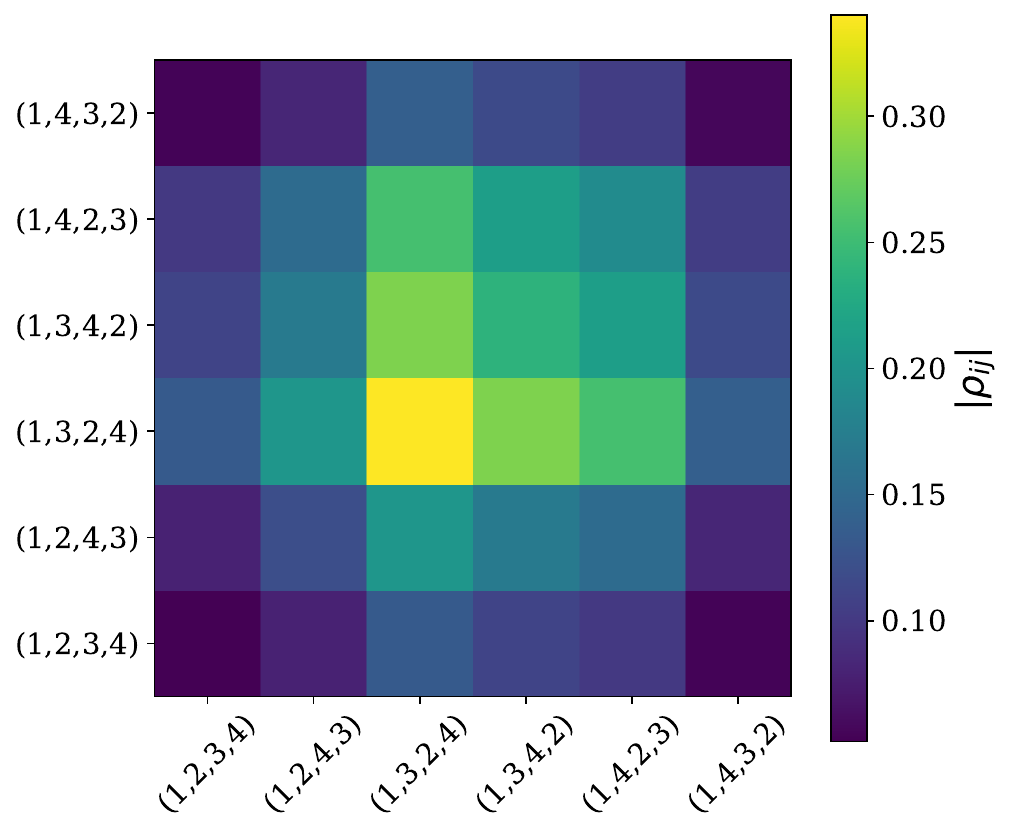}
\caption{
Density matrix of the coherent terminal-state superposition before application of the Kraus channel.
}
\label{fig:density_before}
\end{figure}

The extraction channel is defined through

\begin{equation}
\rho'=\sum_{\sigma\in\mathcal T}K_\sigma\rho K_\sigma^\dagger,
\label{eq:kraus_channel_num}
\end{equation}

where $K_\sigma$ are the Kraus operators introduced in Sec.\ref{pkt}.

Application of the channel removes the inter-sector coherences while preserving the sector weights. The resulting density matrix is shown in Fig.~\ref{fig:density_after}.

\begin{figure}[h]
\centering
\includegraphics[
width=0.5\textwidth,
keepaspectratio
]{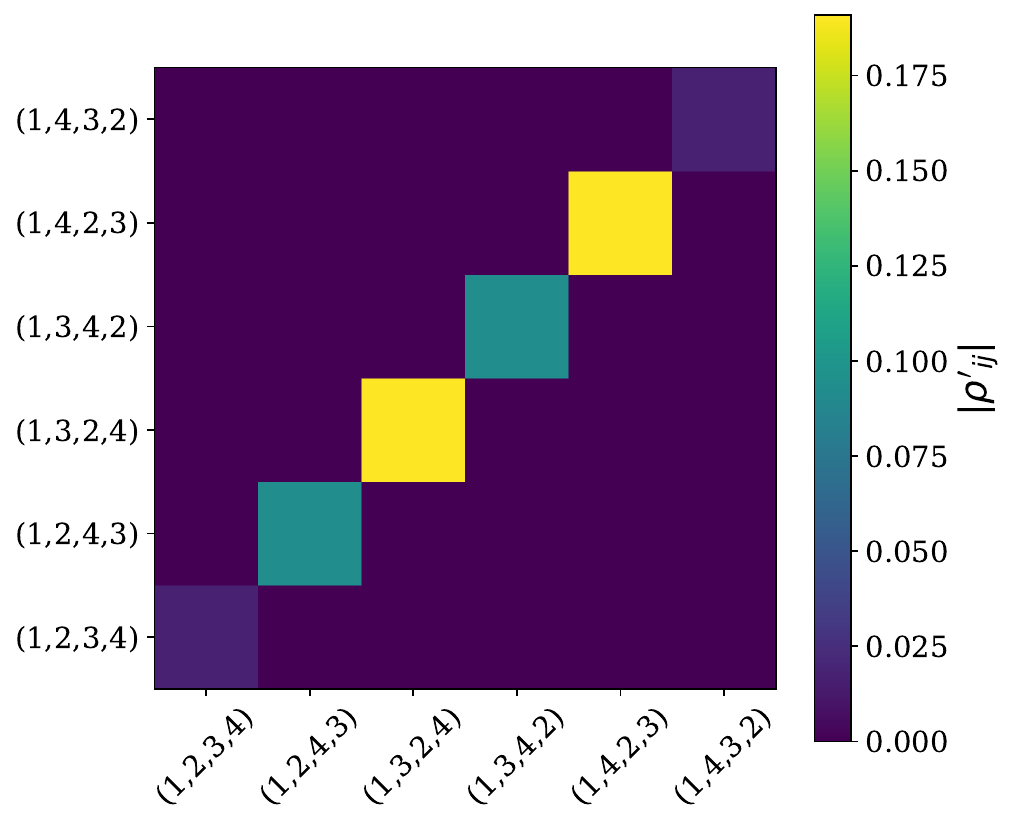}
\caption{
Density matrix after application of the Kraus channel. The off-diagonal coherences are suppressed and only the sector probabilities remain.
}
\label{fig:density_after}
\end{figure}

The diagonal elements of the transformed density matrix satisfy

\begin{equation}
(\rho')_{\sigma\sigma}=\frac{
|A_\sigma|^2
}{
|M|^2
\prod_{r=1}^{n-1}
|\alpha_{\sigma(r)}|^2
},
\label{eq:pt_weight_relation}
\end{equation}

where $A_\sigma$ denotes the Parke--Taylor contribution associated with the permutation sector $\sigma$.

The extracted sector probabilities therefore provide direct access to the relative Parke--Taylor weights encoded within the quantum walk.

\subsection{Numerical simulation of quantum walk and comparison with analytical amplitudes}

The extracted sector weights are compared with the exact Parke--Taylor values in Fig.~\ref{fig:amplitude_comparison}. Excellent agreement is observed across all six permutation sectors.

\begin{figure}[h]
\centering
\includegraphics[
width=0.5\textwidth,
keepaspectratio
]{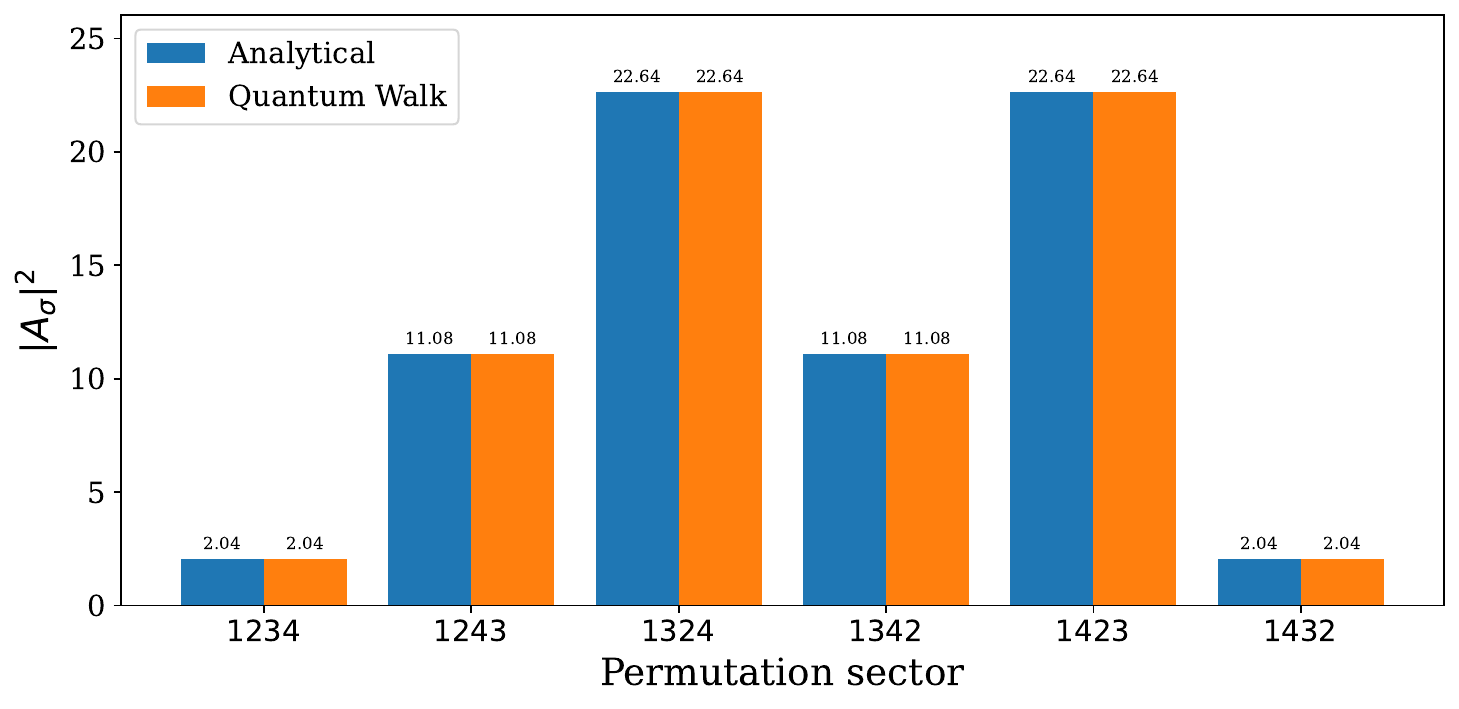}
\caption{
Comparison between analytical Parke--Taylor weights and the values reconstructed from the quantum walk framework for all terminal permutation sectors.
}
\label{fig:amplitude_comparison}
\end{figure}

The reconstructed values reproduce the analytical amplitudes with sub-percent relative error for every sector, demonstrating that the quantum walk correctly encodes both the combinatorial structure of the permutation tree and the associated kinematic weights.

\subsection{Coherent Reconstruction of the Scattering Amplitude}

The Kraus-channel formulation discussed above extracts the sector-resolved probabilities associated with the individual Parke--Taylor contributions. To recover the full scattering amplitude, the terminal permutation sectors must be recombined coherently.

Starting from the final walk state

\begin{equation}
\ket{\psi_{\mathrm{fin}}}
=
\sum_{\sigma\in\mathcal T}
\psi_\sigma
\,
\ket{\sigma}
\otimes
\ket{k_\sigma},
\end{equation}

we apply the weighted collection operator \(W^T\), which maps all terminal sectors to a common reference node \(R\) while preserving their relative amplitudes.

\begin{figure}[h]
\centering
\includegraphics[
width=0.4\textwidth,
keepaspectratio
]{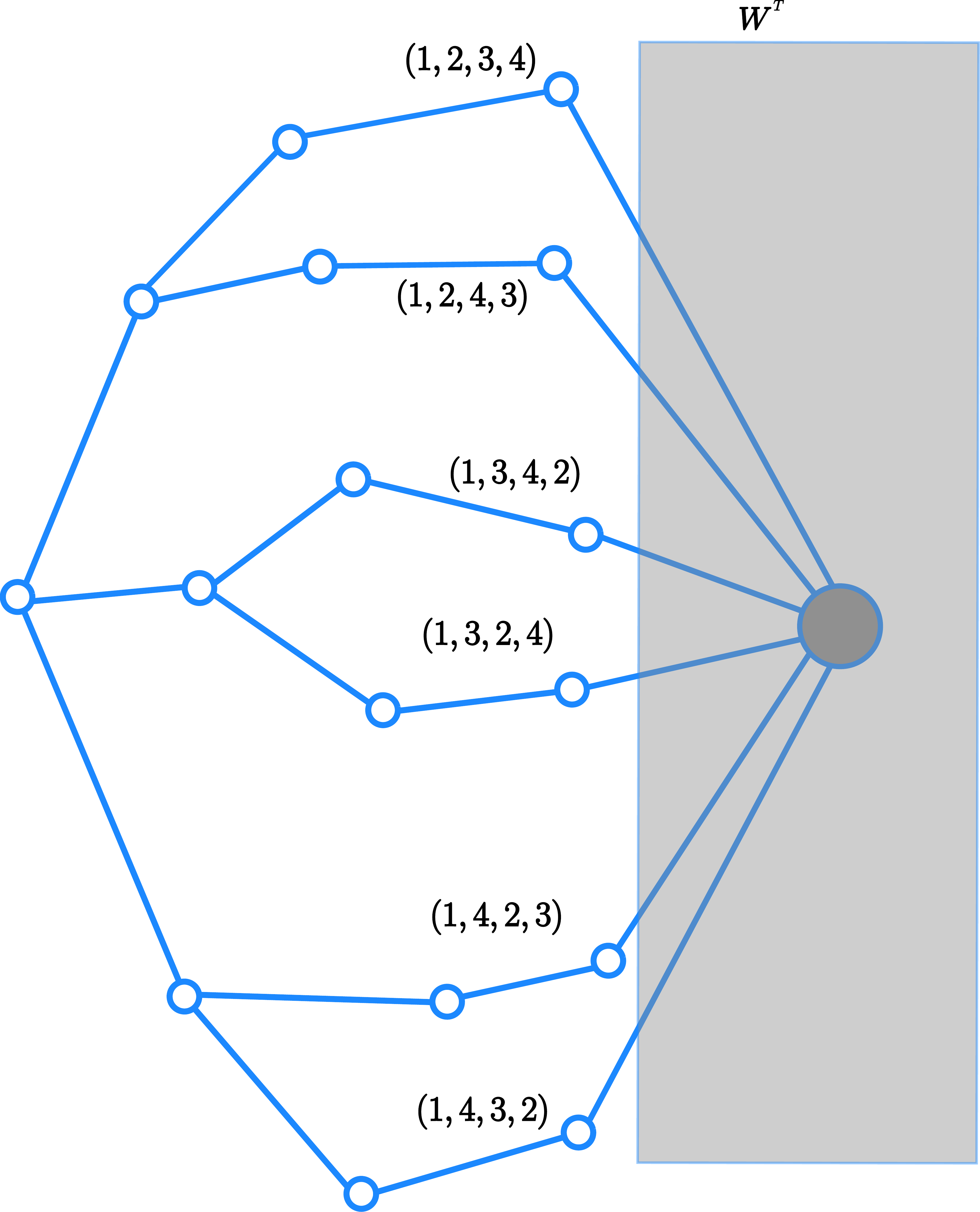}
\caption{
Graphical representation of the weighted collection procedure.
The terminal permutation sectors of the four-gluon permutation tree are
coherently mapped to a common reference node \(R\). The weights attached
to each branch preserve the Parke--Taylor information and enable their
subsequent interference.
}
\label{fig:collection_graph}
\end{figure}

The resulting state takes the form

\begin{equation}
W^T
\ket{\psi_{\mathrm{fin}}}
=
|R\rangle
\otimes
\sum_{\sigma\in\mathcal T}
c_\sigma
\,
|k_\sigma\rangle ,
\label{eq:collected_state_num}
\end{equation}

where the coefficients \(c_\sigma\) contain the color and kinematic information associated with the corresponding permutation sector. As illustrated in Fig.~\ref{fig:collection_graph}, the collection operator maps all terminal sectors to a single reference node while preserving their
relative weights.

To generate interference between the collected contributions, a quantum Fourier transform~\cite{Shor1994,Nielsen2010} is applied to the coin space. For a coin dimension \(d\), the Fourier operator is

\begin{equation}
U_{\mathrm{QFT}}
=
\frac{1}{\sqrt d}
\sum_{j,k=0}^{d-1}
e^{2\pi i jk/d}
|j\rangle\langle k|.
\label{eq:qft_num}
\end{equation}

For the four-gluon example considered here, \(d=3\). The Fourier transform mixes the amplitudes stored in the coin basis and produces the coherent superposition required for amplitude reconstruction.

The reconstructed scattering amplitude is obtained by projecting onto the Fourier vacuum mode,

\begin{equation}
\mathcal M_{\mathrm{QW}}
=
\sqrt d\,
\langle 0 |
U_{\mathrm{QFT}}
W^T
\ket{\psi_{\mathrm{fin}}},
\label{eq:reconstructed_amplitude}
\end{equation}
where the factor \(\sqrt d\) compensates for the normalization of the Fourier transform.
\begin{remark}[Phase information]
\label{rem:phase}
The projection formula~\eqref{eq:reconstructed_amplitude} reconstructs the
\emph{complex} amplitude $\mathcal{M}_{\mathrm{QW}}\in\mathbb{C}$.
The relative phases between permutation sectors are preserved
throughout the walk in the coin-space superposition, and are
recovered by the QFT projection.  The squared modulus
$|\mathcal{M}_{\mathrm{QW}}|^2$ is reported in Table~\ref{tab:amplitude_reconstruction} for
comparison with analytical results, but the construction yields
the full complex amplitude.
\end{remark}
The resulting quantity corresponds to the coherent sum of the color-ordered contributions encoded in the quantum walk. Numerical evaluation confirms that the reconstructed amplitude agrees with the analytical Parke--Taylor result within numerical precision.

\begin{table}[h]
\centering
\begin{tabular}{lcc}
\hline
Quantity & Analytical & Quantum Walk \\
\hline

$|\mathcal M|^2$
&
$0.05634$
&
$0.05634$
\\

Relative Error (\%)
&
\multicolumn{2}{c}{0.00}
\\
\hline
\end{tabular}
\caption{
Comparison between the analytical scattering amplitude and the value reconstructed from the quantum walk framework.
}
\label{tab:amplitude_reconstruction}
\end{table}

\section{Conclusion}
We have introduced a quantum-walk representation of color-ordered gluon scattering amplitudes based on the permutation-tree structure underlying the Parke–Taylor formula. The construction maps the recursive generation of color orderings onto a discrete-time quantum walk defined on a permutation graph: coin operators encode local branching probabilities, while shift operators propagate amplitude through the tree, so that the resulting coherent walk state contains contributions from all color-ordered sectors simultaneously. A quantum-channel formulation was developed to extract the sector-resolved probabilities associated with individual Parke–Taylor contributions, and numerical simulations for four-gluon MHV amplitudes show excellent agreement with analytical results, confirming that the quantum walk faithfully reproduces the expected kinematic weights.

These results establish a connection between scattering amplitudes, quantum walks, and open quantum-system techniques and quantum network architecture.  The quantum algorithmic framework of studying scattering amplitudes  provides a quantum-information perspective to model, simulate and study such dynamics using quantum computers. Future directions include extensions to higher multiplicities, non-MHV amplitudes, loop corrections, and implementations using open quantum walks and quantum-network architectures.

\newpage

\appendix

\section*{Supplementary Material}

This supplementary material provides additional implementation details and numerical data supporting the results presented in the main text. In particular, we present the bijectivity proof for the shift operator, summarize the basis-indexing conventions, construction of the local coin and shift operators, implementation of the weighted collection operator, verification of the Kraus channel, and additional numerical results beyond those reported in the main manuscript.

\section{Bijectivity of the Shift Operator}
\label{sec:counting_proof}
\label{sec:unitarity_proof}

We provide additional detail for the bijectivity argument
(Proposition~\ref{prop:S_unitary}).

Under the global edge-labeling convention, coin value $k$
corresponds to particle $p=k+2$.
For each sector $k$, the basis states $V\times\{k\}$ are
partitioned into $F_k$ (forward), $L_k$ (self-loop), and
$D_k$ (displaced); see~\eqref{eq:Fk}--\eqref{eq:Dk}.

\paragraph{Source and target classification.}
The three rules produce three disjoint sets of images:

\begin{enumerate}
\item \textit{Forward images:}
$\mathrm{Im}(F_k)=\{(u,k):u\neq r,\;\kappa(u)=k\}$,
i.e.\ all non-root nodes whose last element is particle $k{+}2$.
The forward rule is injective (each such node has a unique parent).

\item \textit{Self-loop images:}
$\mathrm{Im}(L_k)=L_k$ (each state maps to itself).
These satisfy $k\neq\kappa(v)$, so they are disjoint from
forward images (which satisfy $\kappa(u)=k$).

\item \textit{Displaced images:}
$\mathrm{Im}(D_k)=\mathcal{U}_k$, the uncovered targets.
By definition, $\mathcal{U}_k$ contains neither forward images
nor self-loop states, so this set is disjoint from the other two.
\end{enumerate}

Since the three image sets are disjoint and each rule is
injective on its domain, $\tau$ is injective on $V\times\{k\}$.
By the pigeonhole principle on a finite set, injectivity
implies bijectivity.
Since this holds for every $k$, $\tau$ is a permutation on the
full basis and $S$ is unitary.

\section{Basis Indexing Convention}

The quantum walk is implemented on the Hilbert space

\begin{equation}
\mathcal H
=
\mathcal H_{\mathrm{pos}}
\otimes
\mathcal H_{\mathrm{coin}},
\end{equation}

with

\begin{equation}
\dim(\mathcal H_{\mathrm{pos}})=16,
\qquad
\dim(\mathcal H_{\mathrm{coin}})=3.
\end{equation}

The computational basis is ordered according to

\begin{equation}
|0,0\rangle,
|0,1\rangle,
|0,2\rangle,
|1,0\rangle,
|1,1\rangle,
|1,2\rangle,
\ldots,
|15,2\rangle,
\end{equation}

resulting in a total Hilbert-space dimension of \(48\).

Table~\ref{tab:basis_map} summarizes the mapping between node labels and permutation sectors.

\begin{table}[h]
\centering
\begin{tabular}{cc}
\hline
Node Index & Permutation Sector \\
\hline
0 & $(1)$ \\
1 & $(12)$ \\
2 & $(13)$ \\
3 & $(14)$ \\
4 & $(123)$ \\
5 & $(124)$ \\
6 & $(132)$ \\
7 & $(134)$ \\
8 & $(142)$ \\
9 & $(143)$ \\
10 & $(1234)$ \\
11 & $(1243)$ \\
12 & $(1324)$ \\
13 & $(1342)$ \\
14 & $(1423)$ \\
15 & $(1432)$ \\
\hline
\end{tabular}
\caption{Basis indexing convention used in the numerical implementation.}
\label{tab:basis_map}
\end{table}

\section{Classification of Basis States in the Shift Operator}
\label{app:shift_classification}

For the four-gluon permutation tree, every basis state
$|v,k\rangle\in V\times\{0,1,2\}$ belongs uniquely to one of the
three subsets $F_k$, $L_k$, or $D_k$ introduced in
Sec.~\ref{ssec:shift}. The classification is determined entirely by
the remaining-particle set $R(v)$, the incoming coin label
$\kappa(v)$, and whether the node is terminal.
Table~\ref{tab:shift_classification} lists the resulting partition
explicitly for all sixteen nodes of the permutation tree.

\begin{table}[t]
\centering
\caption{Classification of the basis states for the four-gluon
permutation tree. Here $F$, $L$, and $D$ denote the forward,
loop and displaced subsets, respectively.}
\label{tab:shift_classification}

\renewcommand{\arraystretch}{1.15}

\begin{tabular}{c|ccc}
\hline
Node $v$
&
$k=0$
&
$k=1$
&
$k=2$
\\
\hline

$(1)$
&
$F$
&
$F$
&
$F$
\\

$(1,2)$
&
$D$
&
$F$
&
$F$
\\

$(1,3)$
&
$F$
&
$D$
&
$F$
\\

$(1,4)$
&
$F$
&
$F$
&
$D$
\\

$(1,2,3)$
&
$L$
&
$D$
&
$F$
\\

$(1,2,4)$
&
$L$
&
$F$
&
$D$
\\

$(1,3,2)$
&
$D$
&
$L$
&
$F$
\\

$(1,3,4)$
&
$F$
&
$L$
&
$D$
\\

$(1,4,2)$
&
$D$
&
$F$
&
$L$
\\

$(1,4,3)$
&
$F$
&
$D$
&
$L$
\\

$(1,2,3,4)$
&
$D$
&
$D$
&
$D$
\\

$(1,2,4,3)$
&
$D$
&
$D$
&
$D$
\\

$(1,3,2,4)$
&
$D$
&
$D$
&
$D$
\\

$(1,3,4,2)$
&
$D$
&
$D$
&
$D$
\\

$(1,4,2,3)$
&
$D$
&
$D$
&
$D$
\\

$(1,4,3,2)$
&
$D$
&
$D$
&
$D$
\\

\hline
\end{tabular}

\end{table}

The cardinalities of the three subsets for each coin sector are
summarized in Table~\ref{tab:shift_counts}. As expected,

\begin{equation}
|F_k|+|L_k|+|D_k|
=
|V|,
\end{equation}

for every coin value $k$, confirming that the three subsets form a
partition of the node set in each coin sector.
\begin{table}[h]
\centering
\caption{Cardinalities of the three subsets for the four-gluon
example.}
\label{tab:shift_counts}

\renewcommand{\arraystretch}{1.15}

\begin{tabular}{c|ccc}
\hline
Subset
&
$k=0$
&
$k=1$
&
$k=2$
\\
\hline

$F_k$
&
6
&
6
&
6
\\

$L_k$
&
3
&
3
&
3
\\

$D_k$
&
7
&
7
&
7
\\

\hline

Total
&
16
&
16
&
16
\\

\hline
\end{tabular}

\end{table}

\section{Quantum Fourier Transform}

For the three-dimensional coin space used in the four-gluon example, the quantum Fourier transform is

\begin{equation}
U_{\mathrm{QFT}}
=
\frac{1}{\sqrt3}
\begin{pmatrix}
1 & 1 & 1 \\
1 & \omega & \omega^2 \\
1 & \omega^2 & \omega
\end{pmatrix},
\end{equation}

where

\begin{equation}
\omega=e^{2\pi i/3}.
\end{equation}

After application of the collection operator, the Fourier transform generates the interference required for reconstruction of the color-ordered scattering amplitude.

\section{Five-Gluon Permutation Graph}

The main text focuses on the four-gluon scattering process. To illustrate the scalability of the framework, Fig.~\ref{fig:five_gluon_tree} shows the corresponding graph for the five-gluon case.

\begin{figure}[h]
\centering
\includegraphics[
width=0.5\textwidth,
keepaspectratio
]{tree_5gluons.pdf}
\caption{
Permutation graph for the five-gluon scattering process.
The graph is generated using the same recursive insertion procedure used for the four-gluon example and illustrates the growth of the permutation-tree structure with increasing multiplicity.
}
\label{fig:five_gluon_tree}
\end{figure}

\section{Growth of the Permutation Tree}

For an $n$-gluon scattering process, the permutation tree is generated by
successively inserting a new particle into all admissible positions of the
existing orderings. The number of terminal permutation sectors is therefore

\begin{equation}
N_{\mathrm{term}}=(n-1)!.
\end{equation}

The total number of nodes in the permutation tree is

\begin{equation}
g(n)
=
\sum_{k=0}^{n-1}
\frac{(n-1)!}{k!},
\label{eq:total_nodes}
\end{equation}

while the Hilbert-space dimension of the coined quantum walk is

\begin{equation}
D(n)
=
(n-1)\,g(n),
\label{eq:hilbert_dimension}
\end{equation}

where the factor $(n-1)$ arises from the coin dimension.

Table~\ref{tab:growth} summarizes the scaling of the permutation-tree
representation and the corresponding quantum-walk Hilbert space.

\begin{table}[h]
\centering
\small
\begin{tabular}{ccc}
\hline
$n$ & $g(n)$ & $D(n)$ \\
\hline
4 & 16 & 48 \\
5 & 65 & 260 \\
6 & 326 & 1630 \\
\hline
\end{tabular}
\caption{
Growth of the permutation-tree graph and the corresponding quantum-walk
Hilbert-space dimension.
}
\label{tab:growth}
\end{table}

\section{Derivation of the Quantum-Walk State} \label{coh_evo}

We derive the explicit form of the quantum-walk state generated
by $(SC)^{n-1}|\psi_0\rangle$ and establish its correspondence
to the Parke--Taylor denominator.  The derivation is fully
consistent with the coin condition~\eqref{eq:coin_column} of the
main text: the physically relevant column of $C_v$ is always
column $\kappa(v)$, equal to the coin value carried by the walker
upon arriving at $v$.

\paragraph{Step 0 (Initialization).}
\begin{equation}
  |\psi_0\rangle = |(1)\rangle\otimes|0\rangle.
  \tag{E1}
\end{equation}
The root has $\kappa(r)=0$ by convention, so the coin operator
$C_{(1)}$ is applied through its column~$0$.

\paragraph{Step 1.}
\begin{align}
  C|\psi_0\rangle
  &= \sum_{k_1=0}^{d-1}
     \bigl(C_{(1)}\bigr)_{k_1,\,0}
     \;|(1)\rangle\otimes|k_1\rangle,
  \tag{E2}\\[4pt]
  SC|\psi_0\rangle
  &= \sum_{k_1=0}^{d-1}
     \bigl(C_{(1)}\bigr)_{k_1,\,0}
     \;|(1,k_1{+}2)\rangle\otimes|k_1\rangle,
  \tag{E3}
\end{align}
where the shift appends particle $k_1{+}2$ (the particle
corresponding to coin $k_1$ under the global map).
The child $(1,k_1{+}2)$ has $\kappa\bigl((1,k_1{+}2)\bigr)
=c(k_1{+}2)=k_1$.

\paragraph{General step $m$.}
After $m$ applications, with $v_0=r$ and
$v_s=(1,k_1{+}2,\ldots,k_s{+}2)$ for $s\ge 1$:
\begin{equation}
  (SC)^m|\psi_0\rangle
  = \sum_{k_1,\ldots,k_m}
    \left(\prod_{s=1}^{m}
      \bigl(C_{v_{s-1}}\bigr)_{k_s,\;\kappa(v_{s-1})}
    \right)
    |v_m\rangle\otimes|k_m\rangle.
  \tag{E4}
\end{equation}
The column index $\kappa(v_{s-1})=k_{s-1}$ (where $k_0=0$) is
the coin label with which the walker arrived at $v_{s-1}$.
Under the global convention, $k_{s-1}=c(\sigma(s))=\sigma(s)-2$
for $s\ge 2$ and $k_0=0$ for the root.
This is guaranteed by the shift operator, which preserves the
coin label: the walker arrives at $v_{s-1}$ in state
$|k_{s-1}\rangle$, the coin operator maps $|k_{s-1}\rangle$
through column $\kappa(v_{s-1})=k_{s-1}$ of $C_{v_{s-1}}$,
and the shift routes to the child indexed by the new coin value
$k_s=c(\sigma(s{+}1))=\sigma(s{+}1)-2$.

\paragraph{Terminal state ($m=n-1$).}
Each sequence $(k_1,\ldots,k_{n-1})$ with
$k_s\in\{0,\ldots,|R(v_{s-1})|-1\}$ specifies a unique
root-to-terminal path and hence a unique permutation
$\sigma=(1,\sigma(2),\ldots,\sigma(n))\in S_{n-1}$.
The map from valid coin sequences to permutations is a bijection.
Re-indexing:
\begin{equation}
  |\psi_{\mathrm{fin}}\rangle
  = \sum_{\sigma\in S_{n-1}}
    \psi_\sigma\,|\sigma\rangle\otimes|\kappa_\sigma\rangle,
  \tag{E5}
\end{equation}
where $\kappa_\sigma=k_{n-1}=c(\sigma(n))=\sigma(n)-2$ is the
global coin label of the last appended particle.

\paragraph{Explicit path amplitude.}
From condition~\eqref{eq:coin_column} and the global coin
map~\eqref{eq:global_coin_map}, at step $s$ the relevant matrix
element is
\begin{equation}
  \bigl(C_{v_{s-1}}\bigr)_{c(\sigma(s{+}1)),\,\kappa(v_{s-1})}
  = \tilde{v}_{v_{s-1},\,\sigma(s{+}1)}
  = \frac{1}{\alpha_{v_{s-1}}}
    \cdot
    \frac{1}{\langle\sigma(s)\;\sigma(s{+}1)\rangle},
  \tag{E6}
\end{equation}
where the row index $c(\sigma(s{+}1))=\sigma(s{+}1)-2$ is the
global coin label of the appended particle, and
$\ell(v_{s-1})=\sigma(s)$.
Taking the product over $s=1,\ldots,n-1$:
\begin{equation}
  \psi_\sigma
  = \left(\prod_{r=1}^{n-1}\frac{1}{\alpha_{\sigma(r)}}\right)
    \cdot
    \frac{1}{\langle 1\,\sigma(2)\rangle
              \langle\sigma(2)\,\sigma(3)\rangle
              \cdots
              \langle\sigma(n)\,1\rangle}.
  \tag{E7}
\end{equation}
Comparing with
\begin{equation}
  A_\sigma
  = \frac{\langle 1\tilde{k}\rangle^4}
         {\langle 1\,\sigma(2)\rangle
          \langle\sigma(2)\,\sigma(3)\rangle
          \cdots
          \langle\sigma(n)\,1\rangle},
  \tag{E8}
\end{equation}
yields
\begin{equation}
  \boxed{\;
  \psi_\sigma
  = \frac{\langle\sigma(n)\,1\rangle}
         {\langle 1\tilde{k}\rangle^4\;
          \displaystyle\prod_{r=1}^{n-1}\alpha_{\sigma(r)}}
  \;A_\sigma.
  \;}
  \tag{E9}
\end{equation}
The coined quantum walk thus reproduces the ordered
Parke--Taylor denominator through coherent evolution on the
permutation tree, with the proportionality constant explicitly
known.

\section{Derivation of the Kraus-Channel Representation} \label{KC_rep}

The final coherent walk state derived in Appendix \ref{coh_evo} is

\begin{equation}
\ket{\psi_{\mathrm{fin}}}
=
\sum_{\sigma\in\mathcal T}
\psi_\sigma
\,
\ket{\sigma}
\otimes
\ket{k_\sigma},
\end{equation}

where

\begin{equation}
\psi_\sigma
=
A_\sigma
\frac{
\langle\sigma(n)\,1\rangle
}{
\langle1\tilde{k}\rangle^4
\prod_r
\alpha_{\sigma(r)}
}.
\end{equation}

The corresponding density operator is

\begin{equation}
\rho
=
\ket{\psi_{\mathrm{fin}}}
\bra{\psi_{\mathrm{fin}}}.
\end{equation}

To isolate the contribution of each permutation sector we introduce the
family of Kraus operators

\begin{equation}
K_\sigma
=
c_\sigma
|\sigma\rangle
\langle\sigma|
\otimes
I_{\rm coin},
\end{equation}

with

\begin{equation}
c_\sigma
=
\frac{\langle1\tilde{k}\rangle^4}
{\langle\sigma(n)\,1\rangle}
\frac1M,
\tag{F5}
\end{equation}

where

\begin{equation}
M
=
\max_\sigma
\left|
\frac{\langle1\tilde{k}\rangle^4}
{\langle\sigma(n)\,1\rangle}
\right|.
\end{equation}

The normalization guarantees

\begin{equation}
|c_\sigma|\le1,
\end{equation}

and therefore

\begin{align}
\sum_\sigma
K_\sigma^\dagger
K_\sigma
&=
\sum_\sigma
|c_\sigma|^2
|\sigma\rangle
\langle\sigma|
\otimes
I_{\rm coin}
\nonumber\\
&\le
I,
\end{align}

showing that the resulting map is completely positive and trace
non-increasing.

The quantum channel acts as
\begin{equation}
  \rho' = \sum_\sigma K_\sigma\rho\, K_\sigma^\dagger.
  \tag{F8}
\end{equation}
Since $K_\sigma = c_\sigma|\sigma\rangle\langle\sigma|
\otimes I_{\mathrm{coin}}$, applying $K_\sigma$ from the left
projects the position register onto $|\sigma\rangle$, and
applying $K_\sigma^\dagger$ from the right projects onto
$\langle\sigma|$:
\begin{equation}
  K_\sigma\bigl(|\sigma'\rangle\langle\sigma''|\otimes
  |k\rangle\langle k'|\bigr)K_\sigma^\dagger
  = |c_\sigma|^2\,
    \delta_{\sigma\sigma'}\delta_{\sigma\sigma''}
    \;|\sigma\rangle\langle\sigma|\otimes|k\rangle\langle k'|.
  \tag{F9}
\end{equation}
Substituting $\rho=\sum_{\sigma',\sigma''}
\psi_{\sigma'}\psi_{\sigma''}^*
|\sigma'\rangle\langle\sigma''|\otimes
|k_{\sigma'}\rangle\langle k_{\sigma''}|$
and summing over $\sigma$:
\begin{equation}
  \rho'
  = \sum_\sigma |c_\sigma|^2|\psi_\sigma|^2
    \;|\sigma\rangle\langle\sigma|
    \otimes|k_\sigma\rangle\langle k_\sigma|.
  \tag{F10}
\end{equation}
The coin degree of freedom is not projected by $K_\sigma$
(it contains $I_{\mathrm{coin}}$); the off-diagonal coin terms
$|k_{\sigma'}\rangle\langle k_{\sigma''}|$ with $\sigma'\neq\sigma''$
vanish because the position orthogonality
$\delta_{\sigma\sigma'}\delta_{\sigma\sigma''}$ forces
$\sigma'=\sigma''=\sigma$, fixing the coin label to
$k_\sigma$.
Using~(F5) and~(E9):
\begin{equation}
  |c_\sigma|^2|\psi_\sigma|^2
  = \frac{|A_\sigma|^2}
         {|M|^2\,\displaystyle\prod_{r}|\alpha_{\sigma(r)}|^2},
  \tag{F11}
\end{equation}
so that $(\rho')_{\sigma\sigma}\propto|A_\sigma|^2$.

Consequently,

\begin{equation}
(\rho')_{\sigma\sigma}
=
\frac{
|A_\sigma|^2
}
{
|M|^2
\prod_r
|\alpha_{\sigma(r)}|^2
},
\end{equation}

and, using the proportionality relation~\eqref{eq:psi_Asigma},

\begin{equation}
(\rho')_{\sigma\sigma}
\propto
|A_\sigma|^2.
\end{equation}

Hence the Kraus-channel construction extracts the squared
Parke--Taylor contribution associated with each terminal permutation
sector while preserving complete positivity and trace
non-increase.

\section{Construction of the Weighted Collection Operator} \label{WCO}
\noindent The final coherent quantum-walk state obtained in Appendix \ref{coh_evo} contains the color-ordered Parke--Taylor contributions distributed over the terminal nodes of the permutation tree. In order to reconstruct the complete scattering amplitude, these terminal sectors must be coherently combined while preserving their relative amplitudes. For this purpose we introduce the weighted collection operator

\begin{equation}
W^T
=
\sum_{j_{n-1}\in V\setminus\{R\}}
\cdots
\sum_{j_1\in V\setminus\{R\}}
C_{j_{n-1},R}^{\,n-1}
C_{j_{n-2},R}^{\,n-2}
\cdots
C_{j_1,R}^{\,1},
\end{equation}

where

\begin{align}
C_{j_k,R}^{\,i}
=
a_\sigma
|R\rangle
\langle j_k|
\otimes
|i\rangle
\langle i|
+
\sum_{\tilde{i}\neq i}
|j_k\rangle
\langle j_k|
\otimes
|\tilde{i}\rangle
\langle\tilde{i}|.
\end{align}

The first term transfers the walker from the terminal node
\(j_k\) to the common reference node \(R\) conditioned on the
coin state \(i\), while the remaining coin components are left
unchanged.

The coefficient \(a_\sigma\) is associated with the complete
terminal permutation

\[
\sigma=(j_1,j_2,\ldots,j_{n-1}),
\]

and contains the corresponding color and kinematic information,

\begin{equation}
a_\sigma
=
\frac{
\mathrm{Tr}
\left(
T^{(\sigma')_0}
T^{(\sigma')_2}
\cdots
T^{(\sigma')_{n-1}}
\right)
\,
\dfrac{\langle1\tilde{k}\rangle^4}
{\langle\sigma(n)\,1\rangle}
}
{\mathcal N},
\end{equation}

where \(\sigma'\) denotes the corresponding color ordering.

The normalization constant

\begin{equation}
\mathcal N
=
\max_\sigma
\left|
\mathrm{Tr}\!\left(
T^{a_1}
T^{a_{\sigma(2)}}
\cdots
T^{a_{\sigma(n)}}
\right)
\frac{\langle1\tilde{k}\rangle^4}
{\langle\sigma(n)\,1\rangle}
\right|
\end{equation}
is chosen so that
\begin{equation}
|a_\sigma|\le1,
\end{equation}
ensuring that the collection operator remains bounded.\\

We  evaluate the action of the weighted collection operator on the
terminal quantum-walk state

\begin{equation}
\ket{\psi_{\rm fin}}
=
\sum_{\sigma\in\mathcal T}
A_\sigma
\frac{
\langle\sigma(n)\,1\rangle
}{
\langle1\tilde{k}\rangle^4
\prod_r
\alpha_{\sigma(r)}
}
\,
\ket{\sigma}
\otimes
\ket{k_\sigma}.
\end{equation}

Because the coin states form an orthonormal basis,

\[
\langle i|\tilde{i}\rangle
=
\delta_{i\tilde{i}},
\]

all operator sequences whose coin conditioning is inconsistent with the terminal path vanish.
Consequently only the sequence associated with the permutation
\(\sigma\) contributes,

\begin{equation}
\widetilde C_\sigma
=
C_{j_{n-1},R}^{\,n-1}
\cdots
C_{j_1,R}^{\,1}
=
a_\sigma
|R\rangle
\langle\sigma|
\otimes
|k_\sigma\rangle
\langle k_\sigma|.
\end{equation}

The action of the complete collection operator is therefore

\begin{align}
W^T
\ket{\psi_{\rm fin}}
=
\sum_{\sigma\in\mathcal T}
&
a_\sigma
A_\sigma
\frac{
\langle\sigma(n)\,1\rangle
}{
\langle1\tilde{k}\rangle^4
\prod_r
\alpha_{\sigma(r)}
}
\nonumber\\
&
\times
|R\rangle
\otimes
|k_\sigma\rangle.
\end{align}

Substituting the explicit expression for \(a_\sigma\) yields

\begin{align}
W^T
\ket{\psi_{\rm fin}}
=
\sum_{\sigma\in\mathcal T}
&
\frac{
\mathrm{Tr}
\left(
T^{(\sigma')_0}
T^{(\sigma')_2}
\cdots
T^{(\sigma')_{n-1}}
\right)
}
{\mathcal N}
\nonumber\\
&
\times
\frac{
A_\sigma
}
{
\prod_r
\alpha_{\sigma(r)}
}
|R\rangle
\otimes
|k_\sigma\rangle.
\end{align}

Thus all terminal sectors are coherently collected at the reference
node while the interference information is preserved entirely in the
coin degrees of freedom.

\section{Quantum Fourier Transform and Amplitude Reconstruction} \label{QFT_AR}

The collected state obtained in Appendix \ref{WCO} has the form

\begin{equation}
W^T
\ket{\psi_{\rm fin}}
=
|R\rangle
\otimes
\sum_{\sigma\in\mathcal T}
c_\sigma
|k_\sigma\rangle,
\end{equation}

where

\begin{equation}
c_\sigma
=
\frac{
\mathrm{Tr}
\left(
T^{(\sigma')_0}
T^{(\sigma')_2}
\cdots
T^{(\sigma')_{n-1}}
\right)
}
{\mathcal N}
\,
\frac{
A_\sigma
}
{
\prod_r
\alpha_{\sigma(r)}
}.
\end{equation}

The Quantum Fourier Transform~\cite{Coppersmith1994,Nielsen2010} on the \((n-1)\)-dimensional coin space is

\begin{equation}
U_{\rm QFT}
=
\frac1{\sqrt{n-1}}
\sum_{j,k=0}^{n-2}
e^{2\pi ijk/(n-1)}
|j\rangle
\langle k|.
\end{equation}

Applying the QFT gives

\begin{equation}
(\mathbb I\otimes U_{\rm QFT})
W^T
\ket{\psi_{\rm fin}}
=
|R\rangle
\otimes
\sum_{\sigma}
c_\sigma
U_{\rm QFT}
|k_\sigma\rangle.
\end{equation}

Finally, projection onto the Fourier vacuum state produces

\begin{equation}
\mathcal M
=
(\mathbb I\otimes\langle0|)
(\mathbb I\otimes U_{\rm QFT})
W^T
\ket{\psi_{\rm fin}}.
\end{equation}

Since

\begin{equation}
\langle0|
U_{\rm QFT}
|k_\sigma\rangle
=
\frac1{\sqrt{n-1}},
\end{equation}

the reconstructed amplitude becomes

\begin{equation}
\mathcal M
=
\frac1{\sqrt{n-1}}
\sum_{\sigma\in\mathcal T}
\frac{
\mathrm{Tr}
\left(
T^{a_1}
T^{a_{\sigma(2)}}
\cdots
T^{a_{\sigma(n)}}
\right)
}
{\mathcal N}
\,
\frac{
A_\sigma
}
{
\prod_r
\alpha_{\sigma(r)}
}.
\end{equation}

Therefore,

\begin{equation}
|\mathcal M|^2
\propto
| \sum_{\sigma\in S_{n-1}}
 \mathrm{Tr}
\left(
T^{a_1}
T^{a_{\sigma(2)}}
\cdots
T^{a_{\sigma(n)}}
\right)
A_\sigma|^2,
\end{equation}
demonstrating that the weighted collection operator together with the Quantum Fourier Transform reconstructs the coherent color-decomposed scattering amplitude from the quantum-walk state.

\section{Verification of the Kraus Channel}

The extraction procedure introduced in the main text is implemented through
a set of Kraus operators \(\{K_i\}\) acting on the terminal permutation
sectors. For the resulting map to represent a physically valid quantum
operation, the Kraus operators must satisfy the trace non-increasing
condition

\begin{equation}
\sum_i
K_i^\dagger K_i
\le
I.
\end{equation}

This condition guarantees complete positivity and ensures that the channel
corresponds to a valid measurement-induced extraction process.

To verify this property numerically, we construct the operator

\begin{equation}
\Lambda
=
\sum_i
K_i^\dagger K_i,
\end{equation}

and compute its spectrum. The extremal eigenvalues are found to be

\begin{equation}
\lambda_{\max}(\Lambda)=1,
\qquad
\lambda_{\min}(\Lambda)=0.
\end{equation}

Since all eigenvalues satisfy

\begin{equation}
0
\le
\lambda_i
\le
1,
\end{equation}

the operator \(\Lambda\) is positive semidefinite and bounded above by the
identity operator. Consequently,

\begin{equation}
\Lambda
\le
I,
\end{equation}

which confirms that the Kraus map used in the numerical simulations is
completely positive and trace non-increasing .

This verification establishes the consistency of the extraction channel
employed to isolate the sector-resolved Parke--Taylor contributions from
the coherent quantum-walk state.\\

\section{Additional Numerical Results}

This section contains the complete numerical data used for validation of the quantum-walk reconstruction procedure, including additional kinematic configurations, reconstructed amplitudes, and relative errors.

\section{Comparison with Quantum Circuit Approach}

We now compare the present quantum-walk-based construction with the quantum circuit algorithm for MHV amplitudes introduced in Ref.~\cite{Bashore2025}. While both approaches aim to reconstruct color-ordered scattering amplitudes using quantum-information-inspired frameworks, they differ substantially in their physical interpretation, encoding strategy, and mechanism for generating interference between permutation sectors.

\begin{table*}[h]
\centering 
\renewcommand{\arraystretch}{1.08}
\begin{tabular}{|p{3.2cm}|p{6.2cm}|p{6.2cm}|}
\hline
\textbf{Aspect}
&
\textbf{Quantum Walk Construction (This Work)}
&
\textbf{Quantum Circuit Algorithm (Ref.~\cite{Bashore2025})}
\\
\hline

Basic framework
&
Permutation tree with coined quantum walk dynamics
&
Quantum circuit acting on structured quantum registers
\\
\hline

Underlying structure
&
Directed graph whose paths encode color-ordered sectors
&
Register-based encoding of permutations and particle data
\\
\hline

Hilbert space
&
Position space (tree nodes) \(\otimes\) coin space
&
Permutation, momentum, helicity, and color registers
\\
\hline

Generation of permutations
&
Generated dynamically through coherent propagation on the permutation tree
&
Prepared explicitly as a superposition over permutation basis states
\\
\hline

Kinematic structure
&
Built locally through edge-dependent transition amplitudes
&
Implemented through controlled helicity-dependent gates
\\
\hline

Color structure
&
Introduced through weighted collection coefficients \(a_\sigma\)
&
Implemented through dedicated color operators \(U_C\)
\\
\hline

Physical interpretation
&
Scattering amplitudes emerge from coherent quantum transport and path interference
&
Scattering amplitudes are encoded through structured circuit operations
\\
\hline

Role of the coin system
&
Labels outgoing branches and stores interference information between paths
&
No direct analogue; interference occurs through global circuit operations
\\
\hline

Interference mechanism
&
Generated through coherent recombination at the reference node and coin-space mixing
&
Generated globally through the Quantum Fourier Transform (QFT)
\\
\hline

Amplitude reconstruction
&
Weighted collection operator followed by QFT and projection
&
QFT followed by projection onto a reference ancilla state
\\
\hline

Extraction of \(|A_\sigma|^2\)
&
Obtained through Kraus operators acting on terminal permutation sectors
&
Obtained through measurement of encoded amplitude registers
\\
\hline

Open quantum systems interpretation
&
Explicitly formulated using Kraus operators and completely positive maps
&
Primarily unitary circuit formulation
\\
\hline

Normalization strategy
&
Local normalization of transition weights using bounded coefficients
&
Global normalization using ancilla-assisted unitary embedding
\\
\hline

Permutation encoding
&
Encoded geometrically as root-to-terminal paths on the permutation tree
&
Encoded algebraically in permutation registers
\\
\hline

Scalability structure
&
Factorial number of terminal paths represented through graph connectivity; Hilbert-space dimension \(D(n)=(n-1)\sum_{k=0}^{n-1}(n-1)!/k!\)
&
Permutation sectors encoded using logarithmic register scaling; \(O(n\log n)\) qubits
\\
\hline

Number of coherent steps
&
\(n-1\) applications of the walk operator \(U=SC\)
&
\(O(n!)\) controlled gate operations
\\
\hline

Non-unitary operations
&
Kraus channel (sector extraction); weighted collection operator \(W^T\) (requires ancilla-assisted dilation for physical implementation)
&
Ancilla-assisted unitary embedding for non-unitary amplitude factors
\\
\hline

Key conceptual insight
&
Scattering amplitudes arise dynamically from coherent evolution on permutation graphs
&
Scattering amplitudes are reconstructed through structured quantum circuit synthesis
\\
\hline

\end{tabular}

\caption{
Comparison between the quantum-walk construction developed in this work and the quantum circuit algorithm for MHV scattering amplitudes introduced in Ref.~\cite{Bashore2025}. While both approaches reconstruct coherent sums over color-ordered sectors, the present framework interprets the scattering process as quantum transport and interference on a permutation graph.
}

\label{tab:comparison}

\end{table*}

\end{document}